\newcommand\nc\newcommand
\newcommand{\Author}[1]{\text{#1}}
\nc\bfa{{\boldsymbol a}}\nc\bfA{{\boldsymbol A}}\nc\cA{{\mathcal A}}
\nc\bfb{{\boldsymbol b}}\nc\bfB{{\boldsymbol B}}\nc\cB{{\mathcal B}}
\nc\bfc{{\boldsymbol c}}\nc\bfC{{\boldsymbol C}}\nc\cC{{\mathcal C}}
\nc\sC{{\mathscr C}}
\nc\bfd{{\boldsymbol d}}\nc\bfD{{\boldsymbol D}}\nc\cD{{\mathcal D}}
\nc\bfe{{\boldsymbol e}}\nc\bfE{{\boldsymbol E}}\nc\cE{{\mathcal E}}
\nc\bff{{\boldsymbol f}}\nc\bfF{{\boldsymbol F}}\nc\cF{{\mathcal F}}
\nc\bfg{{\boldsymbol g}}\nc\bfG{{\boldsymbol G}}\nc\cG{{\mathcal G}}
\nc\bfh{{\boldsymbol h}}\nc\bfH{{\boldsymbol H}}\nc\cH{{\mathcal H}}
\nc\bfi{{\boldsymbol i}}\nc\bfI{{\boldsymbol I}}\nc\cI{{\mathcal I}}
\nc\bfj{{\boldsymbol j}}\nc\bfJ{{\boldsymbol J}}\nc\cJ{{\mathcal J}}
\nc\bfk{{\boldsymbol k}}\nc\bfK{{\boldsymbol K}}\nc\cK{{\mathcal K}}
\nc\bfl{{\boldsymbol l}}\nc\bfL{{\boldsymbol L}}\nc\cL{{\mathcal L}}
\nc\bfm{{\boldsymbol m}}\nc\bfM{{\boldsymbol M}}\nc\sM{{\mathscr M}}
\nc\bfn{{\boldsymbol n}}\nc\bfN{{\boldsymbol N}}\nc\cN{{\mathcal N}}
\nc\bfo{{\boldsymbol o}}\nc\bfO{{\boldsymbol O}}\nc\cO{{\mathcal O}}
\nc\bfp{{\boldsymbol p}}\nc\bfP{{\boldsymbol P}}\nc\cP{{\mathcal P}}
\nc\bfq{{\boldsymbol q}}\nc\bfQ{{\boldsymbol Q}}\nc\cQ{{\mathcal Q}}
\nc\bfr{{\boldsymbol r}}\nc\bfR{{\boldsymbol R}}\nc\cR{{\mathcal R}}
\nc\bfs{{\boldsymbol s}}\nc\bfS{{\boldsymbol S}}\nc\cS{{\mathcal S}}
\nc\bft{{\boldsymbol t}}\nc\bfT{{\boldsymbol T}}\nc\cT{{\mathcal T}}
\nc\bfu{{\boldsymbol u}}\nc\bfU{{\boldsymbol U}}\nc\cU{{\mathcal U}}
\nc\bfv{{\boldsymbol v}}\nc\bfV{{\boldsymbol V}}\nc\cV{{\mathcal V}}
\nc\bfw{{\boldsymbol w}}\nc\bfW{{\boldsymbol W}}\nc\cW{{\mathcal W}}
\nc\bfx{{\boldsymbol x}}\nc\bfX{{\boldsymbol X}}\nc\cX{{\mathcal X}}
\nc\bfy{{\boldsymbol y}}\nc\bfY{{\boldsymbol Y}}\nc\cY{{\mathcal Y}}
\nc\bfz{{\boldsymbol z}}\nc\bfZ{{\boldsymbol Z}}\nc\cZ{{\mathcal Z}}
\nc{\remove}[1]{}
\nc\half{\nicefrac12}
\nc\uz{\underline z}
\def\dist{{d}}
\newtheorem{theorem}{Theorem}
\newtheorem{lemma}[theorem]{Lemma}
\newtheorem{proposition}[theorem]{Proposition}
\newcommand\ff{{\mathbb F}}
\begin{document}
\title{Constructions of Rank Modulation Codes}
\author{Arya Mazumdar$^{\ast}$,\;
{Alexander Barg$^{\S}$}~and~
{Gilles Z{\'e}mor$^{a}$}}
\maketitle

{\renewcommand{\thefootnote}{}\footnotetext{

\vspace{.02in}\nd The work of the first two authors
was supported by NSF grants CCF0916919, CCF0830699,
and DMS0807411.

\nd$^\ast$Department of EECS and Research Laboratory of Electronics,
Massachusetts Institute of Technology, Cambridge, MA, 02139, USA, e-mail: aryam@mit.edu.
Research done while at the Department of ECE and Institute for Systems Research,
University of Maryland, College Park, 20742, USA.

\nd$^\S$Department of ECE and Institute for Systems Research,
University of Maryland, College Park, 20742, USA, e-mail: abarg@umd.edu,
and
Dobrushin Mathematical Laboratory, Institute for Problems of
Information Transmission, Russian Academy of Science, Moscow, Russia.

\nd$^a$ Bordeaux Mathematics Institute, UMR 5251,
University of Bordeaux, France, e-mail:
 zemor@math.u-bordeaux1.fr.
}}
\renewcommand{\thefootnote}{\arabic{footnote}}
\setcounter{footnote}{0}

\maketitle
\begin{abstract}
Rank modulation is a way of encoding information to correct errors in
flash memory devices as well as impulse noise in transmission
lines. Modeling rank modulation involves construction of packings of
the space of permutations equipped with the Kendall tau distance.

We present several general constructions of codes in permutations that
cover a broad range of code parameters. In particular, we show a number of
ways in which conventional error-correcting codes can be modified to correct
errors in the Kendall space. Codes that we construct afford simple encoding
and decoding algorithms of essentially the same complexity as required to
correct errors in the Hamming metric. For instance, from binary BCH codes we obtain
codes correcting $t$ Kendall errors in $n$ memory cells that support
the order of $n!/(\log_2n!)^t$ messages, for any constant $t= 1,2,\dots.$ We also
construct families of codes that correct a number of errors that grows
with $n$ at varying rates, from $\Theta(n)$ to $\Theta(n^{2})$.
One of our constructions gives rise to a family of rank
modulation codes
for which the trade-off between the number of messages and the number of
correctable Kendall errors approaches the optimal scaling rate.
Finally, we list a number of possibilities for constructing codes of finite
length, and give examples of rank modulation codes with specific parameters.
\end{abstract}

\begin{keywords} Flash memory, codes in permutations, rank modulation, transpositions,
Kendall tau distance, Gray map
\end{keywords}

\section{Introduction}\label{sec:intro}
Recently considerable attention in the literature was devoted to coding
problems for non-volatile memory devices, including error correction
in various models as well as data management in memories
\cite{jia10a,jia09b,jia09a,BM2010,cas10}.  Non-volatile memories, in
particular flash memory devices, store data by injecting charges of
varying levels in memory cells that form the device.
The current technology supports multi-level cells with two or more charge levels.
The write procedure into the memory is asymmetric in that it is possible to increase
the charge of an individual cell, while to decrease the charge one must
erase and overwrite a large block of cells using a mechanism called block erasure.
This raises the issue of data management in memory, requiring
data encoding for efficient rewriting of the data \cite{jia10b}.
A related issue concerns the reliability of the stored information which
is affected by the drift of the charge of the cells caused by ageing devices
or other reasons.  Since the drift in different cells may occur at
different speed, errors introduced in the data are adequately
accounted for by tracking the relative value of adjacent cells rather
than the absolute values of cell charges.  Storing information in relative
values of the charges also simplifies the rewriting of the data because
we do not need to reach any particular value of the charge as long as we have
the desired ranking, thereby reducing the risk of overprogramming.
Based on these ideas, Jiang
et al. \cite{jia10a,jia09b}
suggested to use the {\em rank modulation scheme} for error-correcting coding of
data in flash memories.  A similar noise model
arises in transmission over channels subject to impulse noise that
changes the value of the signal substantially but has less effect on
the relative magnitude of the neighboring signals. In an earlier work
devoted to modeling impulse noise, Chadwick and Kurz \cite{CK1969}
introduced the same error model and considered coding problems for
rank modulation. Drift of resistance in memory cells is also the
main source of errors in multilevel-cell phase-change memories \cite{pap11}.

Motivated by the application to flash memories, we consider reliable storage
of information in the rank modulation scheme. Relative ranks of cell charges
in a block of $n$ cells define a permutation on the set of $n$ elements.
Our problem therefore can be formulated as encoding of data into permutations
so that it can be recovered from errors introduced by the drift (decrease) of
the cell charges.

To define the error process formally, let $[n]=\{1,2\dots,n\}$ be a set of $n$
elements and consider the set $S_n$ of permutations of $[n].$
In this paper we use a one-line notation for
permutations: for instance (2,1,3) refers to the permutation
$\genfrac{(}{)}{0pt}{}{123}{213}.$ Referring to the discussion of charge levels of cells,
permutation $(2,1,3)$ means that the highest-charged cell is the second one
followed by the first and then the third cell.
Permutations can be multiplied by applying them successively to the set $[n]$,
namely the action of the permutation $\pi\sigma,$ where $\pi,\sigma\in S_n,$ results in
$i\mapsto \sigma(\pi(i)), i=1,\dots, n.$
(Here and elsewhere we assume that permutations act on the right).
Every permutation has an inverse, denoted $\sigma^{-1}$, and $e$ denotes the
identity permutation.

Let $\sigma=(\sigma(1),\dots,\sigma(n))$ be a permutation of $[n].$
An elementary error occurs when the charge of cell $j$ passes the level
of the charge of the cell with rank one smaller than the rank of $j$. If the $n$-block is
encoded into a permutation, $\sigma,$ this error corresponds to the exchanging
of the locations of the elements $\sigma(j)$ and $\sigma(j+1)$ in the permutation.
For instance, let $\sigma=(3,1,4,2)$ then the effect of the error $\pi=(2,1,3,4)$
is to exchange the locations of the two highest-ranked elements, i.e.,
$\pi\sigma =(1,3,4,2).$

Accordingly, define the {\em Kendall tau  distance}
$d_\tau(\sigma,\pi)$ from $\sigma$ to another permutation
$\pi$  as the minimum number of transpositions of pairwise
adjacent elements required to change $\sigma$ into $\pi.$
Denote by $\cX_n=(S_n,d_\tau)$ the metric space of permutations on $n$ elements
equipped with the distance $d_\tau.$
The Kendall metric was studied in statistics \cite{ken90}
where it was introduced as a measure of proximity of data samples,
as well as in combinatorics and coding theory \cite{DG1977,BM2010}.
The Kendall metric also arises naturally as a Cayley metric on the group $S_n$
if one takes the adjacent transpositions as its generators.

The Kendall distance is one of many metrics on permutations considered in the literature;
see the survey \cite{dez98}.
Coding for the Hamming metric was considered recently in \cite{chu04}
following the observation in \cite{vin00} that permutation arrays are useful
for error correction in powerline communication.
Papers \cite{klo10,shi10,tam10} considered
coding for the $\ell_\infty$ distance on permutations from the perspective of the
rank modulation scheme. Generalizations of Gray codes for rank modulation are
considered in \cite{yeh11}, while an application of LDPC codes to this scheme
is proposed in \cite{zha10}.

An $(n,d)$ code $\cC\subset \cX_n$ is a set of
permutations in $S_n$ such that the minimum distance $d_\tau$ separating any two of
them is at least $d.$
The main questions associated with the coding problem for the Kendall space $\cX_n$
are to establish the size of optimal codes
that correct a given number of errors and, more importantly, to construct explicit
coding schemes.
In our previous work \cite{BM2010} we addressed the first of these problems,
analyzing
both the finite-length and the asymptotic bounds on codes.
Since the maximum value
of the distance in $\cX_n$ is $\binom n2$, this leaves a number
of possibilities for the scaling rate of the distance for asymptotic
analysis, ranging from $d=O(n)$ to $d=\Theta(n^2).$ Define the rate of the code
\begin{equation}\label{eq:rank_rate}
R(\cC)=\log|\cC|/\log(n!)
\end{equation}
(all logarithms are base 2 unless otherwise mentioned) and let
  \begin{align}
    R(n,d)&=\max_{\cC\subset \cX_n} R(\cC) \label{eq:rnd}\\
   \sC(d)&=\lim_{n\to\infty}R(n,d) \label{eq:cap}
  \end{align}
where the maximum in \eqref{eq:rnd} is over all codes with distance $\ge d$.
We have the following result.
\begin{theorem} \cite{BM2010}\label{thm:bounds}
The limit in \eqref{eq:cap} exists, and
   \begin{equation}\label{eq:bounds}
   \sC(d)=\begin{cases} 1 &\text{if } d=O(n)\\
     1-\epsilon &\text{if } d=\Theta(n^{1+\epsilon}), \;0<\epsilon<1\\
     0  &\text{if } d=\Theta(n^2).\end{cases}
  \end{equation}
Moreover,
   \begin{equation*}
       R(n,d)=\begin{cases}O(\log^{-1} n) &\text{if } d=\Theta(n^2)\\
    1-O(\log^{-1} n) &\text {if } d=O(n).
    \end{cases}
  \end{equation*}
 \end{theorem}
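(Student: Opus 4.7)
The plan is to establish matching lower and upper bounds on $R(n,d)$ using the Gilbert--Varshamov and sphere-packing methods adapted to the Kendall space, together with sharp estimates on the ball volume. Let $V(n,r)$ denote the number of permutations within Kendall tau distance $r$ from a fixed permutation; because $d_\tau$ is a Cayley metric on $S_n$, $V(n,r)$ does not depend on the choice of centre. Standard greedy packing and disjoint-balls arguments then give, for any maximum-size $(n,d)$ code $\cC_{\max}$,
$$\frac{n!}{V(n,d-1)} \;\le\; |\cC_{\max}| \;\le\; \frac{n!}{V(n,\lfloor (d-1)/2\rfloor)}.$$

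The core of the proof reduces to estimating $V(n,r)$. Writing each permutation via its inversion table, $V(n,r)$ equals the number of integer sequences $(c_1,\dots,c_n)$ with $0\le c_i\le i-1$ and $\sum_i c_i\le r$. I would compare this count to the unrestricted composition count $\binom{n+r-1}{r}$, argue that the ceiling constraints $c_i\le i-1$ are inactive for almost all indices when $r=o(n^2)$, and control the resulting discrepancy by analysing the Mahonian generating function $\prod_{j=1}^{n}(1+q+\cdots+q^{j-1})$. The targeted estimate is
$$\log V(n,r)=\log\binom{n+r-1}{r}\,(1+o(1))\quad\text{for } r=\Theta(n^{1+\epsilon}),\;0<\epsilon<1,$$
so that in particular $\log V(n,r)\sim \epsilon n\log n$ in this regime.

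Substituting into the two-sided bound, and using $\log n! = n\log n\,(1+o(1))$, yields $\log V(n,d)/\log n!\to\epsilon$ and likewise for $d/2$, so both the upper and lower bound produce the same asymptotic rate $1-\epsilon$, and hence $\sC(d)=1-\epsilon$. For $d=O(n)$ the ball volume is at most $n^{O(d/n)}$ times lower-order factors, giving a rate loss of only $O(\log^{-1} n)$ and $\sC(d)=1$. For $d=\Theta(n^{2})$ the sphere-packing bound degenerates; here I would instead use a partitioning/projection argument --- group the permutation into short blocks in each of which the pairwise order alone accounts for $\Omega(d)$ inversions --- to conclude $|\cC|\le n!/d^{\,\Theta(n)}$, hence $R(n,d)=O(\log^{-1} n)$; a direct code construction supplies the matching lower bound, giving $\sC(d)=0$.

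The main obstacle will be the volume estimate in the intermediate regime: the partial sum of Mahonian numbers must be shown to have the same leading exponential order as the unrestricted composition count, with error terms that are uniform in the exponent $\epsilon\in(0,1)$. Once this estimate is in hand, existence of the limit $\sC(d)$ in \eqref{eq:cap} is automatic from the matching upper and lower asymptotics; alternatively, one can obtain existence independently via a Fekete-type subadditivity argument based on a product construction combining an $(n,d)$ code and an $(m,d)$ code into a code on $[n+m]$ of comparable rate.
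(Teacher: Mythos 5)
Note first that the paper does not actually prove Theorem~\ref{thm:bounds}: it is imported from \cite{BM2010}, and the present text only records that the proof there relied on near-isometric embeddings of $\cX_n$ (via inversion vectors, into an $\ell_1$-type space) followed by volume/packing estimates. Your plan --- Gilbert--Varshamov and sphere-packing bounds in the Kendall metric combined with ball-volume estimates through inversion tables --- is the same circle of ideas, and for the regimes $d=O(n)$ and $d=\Theta(n^{1+\epsilon})$ it goes through: $V(n,r)\le\binom{n+r-1}{r}$, and counting only the inversion tables with $c_i\le\min(i-1,\lfloor r/n\rfloor)$, already give $\log V(n,r)=\epsilon n\log n\,(1+o(1))$ for $r=\Theta(n^{1+\epsilon})$, so the Mahonian generating-function analysis you flag as the main obstacle is not needed. (A minor slip: for $d=O(n)$ the ball volume is $2^{\Theta(n)}$, not ``$n^{O(d/n)}$''; the conclusion $R(n,d)\ge 1-O(\log^{-1}n)$ still follows since $\log V(n,d)=O(n)=o(\log n!)$.)

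The genuine gap is the regime $d=\Theta(n^2)$. Your assertion that sphere packing degenerates there is backwards: sphere packing is exactly what yields the claim, once you prove the ball-volume lower bound $V(n,r)\ge n!\,2^{-O(n)}$ for $r=cn^2$ (for instance, count inversion tables with $c_i\le\lfloor\delta(i-1)\rfloor$ for a small constant $\delta=\delta(c)$, or apply a large-deviation bound to the independent, uniformly distributed inversion-table entries). This gives $|\cC|\le n!/V(n,\lfloor(d-1)/2\rfloor)\le 2^{O(n)}$, hence $R(n,d)=O(\log^{-1}n)$ and $\sC(d)=0$. By contrast, the substitute you propose, a block/projection bound of the form $|\cC|\le n!/d^{\,\Theta(n)}$, is too weak: with $d=\Theta(n^2)$ it only gives $\log|\cC|\le(1-c)\,n\log n$ for some constant $c$, i.e.\ a rate bounded away from $1$, which establishes neither $R(n,d)=O(\log^{-1}n)$ nor even $\sC(d)=0$. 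So as written the third case of \eqref{eq:bounds} and the corresponding ``moreover'' claim are not proved; the repair is easy and stays within your own framework (prove the lower ball-volume estimate at radius $\Theta(n^2)$ and reuse sphere packing), but it must be made.
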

\vspace*{.05in} We remark \cite{BM2010}
that the equality $\sC(d) = 1 - \epsilon$ holds under a
slightly weaker condition, namely, $d = n^{1+\epsilon}\alpha(n)$, where $\alpha(n)$
 grows slower than any positive power of $n.$

Equation \eqref{eq:bounds} suggests the following definition. Let us say
  that an infinite family of codes {\em scales optimally} if there exists
  $\epsilon\in (0,1)$ such that, for any positive $\alpha,\beta$, all codes
  of the family of length $n$ larger than some $n_0$, have rate  at least
  $1-\epsilon-\beta$ and minimum distance $\Omega(n^{1+\epsilon-\alpha})$.

The proof of Theorem \ref{thm:bounds} relied on near-isometric embeddings
of $\cX_n$ into other metric spaces that provide insights into the
asymptotic size of codes.
We also showed \cite{BM2010} that there exists a family of rank modulation codes that
correct a constant number of errors and have size within a constant factor of
the upper (sphere packing) bound.

Regarding the problem of explicit constructions, apart from a
construction in \cite{jia10a} of codes that correct one Kendall error,
no other code families for the Kendall distance are presently
known. Addressing this issue, we provide several general constructions
of codes that cover a broad range of parameters in terms of the code
length $n$ and the number of correctable errors.  We present
constructions of rank modulation codes that correct a given number of
errors as well as several asymptotic results that cover the entire
range of possibilities for the scaling of the number of errors with
the code's length. 
Sect.~\ref{sect:pp} we present a construction of low-rate rank
modulation codes that form subcodes of Reed-Solomon codes, and can be
decoded using their decoding algorithms. In Sect.~\ref{sect:h} we
present another construction that gives rank modulation codes capable
of correcting errors whose multiplicity can be anywhere from a
constant to $O(n^{1+\epsilon}), 0<\epsilon<\half,$ although the code
rate is below the optimal rate of \eqref{eq:bounds}. Relying on this
construction, we also show that there exist sequences of rank
modulation codes derived from binary codes whose parameters exhibit
the same scaling rate as \eqref{eq:bounds} for any $0<\epsilon<1.$
Moreover, we show that almost all linear binary codes can be used to
construct rank modulation codes with this optimal trade-off. Finally,
we present a third construction of rank modulation codes from codes
in the Hamming space that correct
a large number of errors. If the number of errors grows as
$\Theta(n^2),$ then the rate of the codes obtained from binary
codes using this construction attains the optimal scaling of $O(\log^{-1} n).$
Generalizing this construction to start from nonbinary codes,
we design families of rank modulation codes that
scale optimally (in the sense of the above definition)
for all values of $\epsilon$, $0<\epsilon <1$.

Finally, Sect.~\ref{sect:examples} contains some examples of codes obtained
using the new constructions proposed here.

Our constructions rely on codes that correct conventional (Hamming)
errors, converting them into Kendall-error-correcting codes. For this reason,
the proposed
methods can be applied to most families of codes designed for the Hamming
distance, thereby drawing on the rich variety of available constructions
with their simple encoding and decoding algorithms.

\section{Construction I:
Rank modulation codes from permutation polynomials}\label{sect:pp}

Our first construction of rank modulation codes is algebraic in nature.
Let $q = p^m$ for some prime $p$ and let $\ff_q=\{\alpha_0,\alpha_1,\dots,\alpha_{q-1}\}$
be the finite field of $q$ elements.
A polynomial $g(x)\in \ff_q[x]$
is called a {\em permutation polynomial} if the values $g(a)$
are distinct for distinct values of $a\in \ff_q$  \cite[Ch.~7]{LN1983}.

Consider the evaluation map $f\mapsto (f(\alpha_0),\dots,f(\alpha_{q-1}))$
which sends permutation polynomials to permutations of $n$ elements.
Evaluations of permutation polynomials of degree $\le k$ form a subset of
a $q$-ary Reed-Solomon code of dimension $k+1$. Reed-Solomon codes are a family
of error-correcting codes in the Hamming space with a number of desirable properties
including efficient decoding. For an introduction to them see \cite[Ch.~10]{MS1977}.

At the same time, evaluating the size of a rank modulation code
constructed in this way is a difficult problem because it is hard
to compute the number of permutation polynomials of a given degree.
In this section we formalize a strategy for constructing codes along these lines.
This does not result in very good rank modulation codes;
in fact, our later combinatorial constructions will be better in terms
of the size of the codes with given error-correcting capabilities.
Nonetheless, the construction involves some interesting observations
which is why we decided to include it.

A polynomial over $\ff_q$ is called {\em linearized of degree
$\nu$} if it has the form
  $$
\mathcal{L}(x) = \sum_{i=0}^{\nu} a_i x^{p^i}
$$
Note that a linearized polynomial of degree $\nu$ has degree
$p^{\nu}$ when viewed as a standard polynomial.
\remove{
Consider a linearized polynomial $ \mathcal{L}(x) = \sum_{i} a_i x^{p^i}$
over the finite field $\ff_q.$ Suppose $\mathcal{L}(x)$ has degree less than or equal to
$n-2t-1,$ for some integer $t.$ So if $\nu = \lfloor\log_p (n-2t) \rfloor$, then,
$$
\mathcal{L}(x) = \sum_{i=0}^{\nu} a_i x^{p^i}.
$$}

\begin{lemma}\label{lem:linperm}
The number of linearized polynomials over $\ff_q$ of degree less than or equal to $\nu$
that are permutation polynomials in $\ff_q$ is at least
$$\Big(1-\frac{1}{p-1}+\frac{1}{q(p-1)}\Big)q^{\nu+1}\geq q^{\nu}.$$
\end{lemma}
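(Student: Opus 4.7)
The plan is to view linearized polynomials as $\ff_p$-linear endomorphisms of $\ff_q$, count those that are not bijections by examining their kernels, and use a union bound that exploits the $\ff_p$-linearity to tighten the naive estimate.

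First, I would note that the set of linearized polynomials of degree at most $\nu$ is in bijection with $\ff_q^{\nu+1}$ via the coefficient tuple $(a_0,\dots,a_\nu)$, so there are exactly $q^{\nu+1}$ such polynomials. Any such $\mathcal{L}$ acts on $\ff_q$ as an $\ff_p$-linear map, because $\mathcal{L}(x+y)=\mathcal{L}(x)+\mathcal{L}(y)$ (Frobenius is additive) and $\mathcal{L}(cx)=c\mathcal{L}(x)$ for $c\in\ff_p$. Thus $\mathcal{L}$ is a permutation polynomial on $\ff_q$ if and only if its kernel, as an $\ff_p$-linear map, is trivial.

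Next I would count the non-permutations by looking at one-dimensional $\ff_p$-subspaces of $\ff_q$. Since a non-trivial kernel contains at least one such subspace, and there are exactly $(q-1)/(p-1)$ one-dimensional $\ff_p$-subspaces of $\ff_q$, it suffices to bound, for each such $V=\ff_p\alpha$, the number of $\mathcal{L}$ vanishing on $V$. By $\ff_p$-linearity, $\mathcal{L}$ vanishes on $V$ iff $\mathcal{L}(\alpha)=0$. The latter is the single $\ff_q$-linear equation $\sum_{i=0}^\nu a_i\alpha^{p^i}=0$ in the coefficients, which is nontrivial because $\alpha\ne 0$, hence cuts out an $\ff_q$-hyperplane of size $q^\nu$ in the coefficient space.

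A union bound over these subspaces gives at most
\[
\frac{q-1}{p-1}\,q^{\nu}
\]
non-permutations, so the number of permutation polynomials among linearized polynomials of degree $\le\nu$ is at least
\[
q^{\nu+1}-\frac{q-1}{p-1}q^{\nu}
   =\Big(1-\frac{1}{p-1}+\frac{1}{q(p-1)}\Big)q^{\nu+1},
\]
which is the first inequality. The second inequality $\ge q^\nu$ then reduces to checking $(p-2)(q-1)\ge 0$, which is trivial (with equality when $p=2$).

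The main subtlety, and the only substantive step, is recognizing that one should union-bound over $\ff_p$-lines rather than over individual nonzero points of $\ff_q$; the naive union bound over points would give $(q-1)q^\nu$ non-permutations, yielding only the weaker conclusion of a positive number of permutations rather than the sharper bound stated. Once the $\ff_p$-linearity of linearized polynomials is invoked, the counting is routine.
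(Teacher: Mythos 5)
Your proof is correct and is essentially the paper's argument: the paper phrases it probabilistically (uniformly random coefficients, probability $1/q$ that $\mathcal{L}(\alpha)=0$ for fixed $\alpha\neq 0$, union bound over the $(q-1)/(p-1)$ one-dimensional $\ff_p$-subspaces of roots), which is exactly your deterministic hyperplane count divided by $q^{\nu+1}$. The key idea in both — exploiting that the root set is an $\ff_p$-subspace so one unions over lines rather than points — is identical.
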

\begin{IEEEproof}
The polynomial $\mathcal{L}(x)$ acts on $\ff_q$ as a linear homomorphism.
It is injective if and only if it has a trivial kernel, in other words if
the only root of $\mathcal{L}(x)$ in  $\ff_q$ is $0.$
Hence, $\mathcal{L}(x)$ is a permutation polynomial
if and only if the only root of $\mathcal{L}(x)$ in  $\ff_q$ is $0$.

The total number of linearized polynomials of degree up to $\nu$
is  $q^{\nu+1}.$ We are going to prove that at least a $(1-\frac{1}{p-1}+\frac{1}{q(p-1)})$
proportion
of them are permutation polynomials.
To show this, choose the coefficients $a_i, 0\le i\le \nu$ of
$\mathcal{L}(x)= \sum_{i=0}^{\nu} a_i x^{p^i}$
uniformly and randomly from $\ff_q.$
For a fixed $\alpha \in \ff_q^\ast,$ the probability
that $\mathcal{L}(\alpha) =0$ is $1/q.$ Furthermore, the set of roots
of a linearized polynomial is an $\ff_p$-vector space \cite[p.119]{MS1977},
hence the set of non-zero roots is a multiple of $p-1$.
The number of $1$-dimensional subspaces of $\ff_q$ over $\ff_p$ is
$\frac{q-1}{p-1}.$
The probability
that one of these sets is included in the set of roots of $\mathcal{L}(x)$
is, from the union bound,
$$
\Pr(\exists \alpha \in \ff_q^\ast: \mathcal{L}(\alpha) =0 )
\le \frac{q-1}{p-1}\cdot\frac 1q.
$$
Hence, the probability that $\mathcal{L}(x)$ is a permutation polynomial
is greater than or equal to $1-\frac{q-1}{q(p-1)}.$
\end{IEEEproof}

\subsection{Code construction}
We use linearized permutation polynomials of $\ff_q$
to construct codes in the space $\cX_n.$
Note that a linearized polynomial $\mathcal{L}(x)$ always maps
zero to zero, so that when it is a permutation
polynomial it can be considered to be a permutation of the elements of
$\ff_q$ and also of the elements of $\ff_q^\ast$.
Let $t$ be a positive integer and let
$\nu=\lfloor \log_p(n-2t-1)\rfloor.$
Let $\cP_t$ be the set of all linearized
polynomials of degree $\le \nu$ that permute $\ff_q.$
Set $n=q-1$ and define the set $A\subset \ff_q^n$
  $$
    A=\{(\cL(a), a\in \ff_q^\ast), \;\cL\in \cP_t\}
  $$
to be the set of vectors obtained by evaluating the polynomials in $\cP_t$
at the points of $\ff_q^\ast.$
Form a code $\cC_\tau$ by writing the vectors in $A$ as permutations
(for that, we fix some bijection
between $[n]$ and $\ff_q^\ast,$ which will be implicit in the subsequent discussion).
We can have $n=q$ rather than $n=q-1$ if desired: for that we add the zero field element
in the first position of the $(q-1)$-tuples of $A,$ and the construction below
readily extends.

\vspace*{.05in}
The idea behind the construction is quite simple: the set $A$ is a subset of
a Reed-Solomon code that corrects $t$ Hamming errors. Every Kendall
error is a transposition, and as such, affects at most two coordinates of
the codeword of $\cC_\tau.$
Therefore the code $\cC_\tau$ can correct up to $t/2$ errors.
By handling Kendall errors more carefully, we can actually correct up to
$t$ errors. The main result of this part of our work is given by the following statement.
\begin{theorem}\label{thm:RS} The code $\cC_\tau$ has length $n=q-1$ and size at least
$q^{\lfloor\log_p (n-2t-1) \rfloor}.$ It
corrects all patterns of up to $t$ Kendall
errors in the rank modulation scheme under a decoding algorithm of complexity
polynomial in $n$.
\end{theorem}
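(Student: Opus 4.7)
The statement has three components. The length $n=q-1$ is immediate from the construction. For the size, Lemma~\ref{lem:linperm} gives $|\cP_t|\ge q^\nu$; the evaluation map $\cL\mapsto(\cL(a))_{a\in\ff_q^\ast}$ is injective on $\cP_t$ because two distinct linearized polynomials of standard degree $\le p^\nu$ can agree on at most $p^\nu<n$ elements of $\ff_q$, so their evaluation vectors on $\ff_q^\ast$ are distinct, yielding $|\cC_\tau|\ge q^\nu=q^{\lfloor\log_p(n-2t-1)\rfloor}$.

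For decoding, I would start from the fact that the codewords lie inside a Reed--Solomon code $\cR$ of dimension $p^\nu+1$ with minimum Hamming distance at least $n-p^\nu\ge 2t+1$. Since a single adjacent transposition affects at most two coordinates, $t$ Kendall errors yield at most $2t$ Hamming errors, beyond the standard RS decoding radius $t$, and the naive bound $d_\tau\ge d_H/2$ only secures correction of $\lfloor t/2\rfloor$ Kendall errors. To reach $t$, I would exploit the Kendall-error structure directly: a transposition at position $j$ applied to $c$ produces an error vector $r-c$ supported on coordinates $j,j+1$ with values $r_j-r_{j+1}$ and $r_{j+1}-r_j$, where the coefficient depends only on the received $r$. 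Writing $h_j$ for the $j$-th column of an RS parity-check matrix $H$, the syndrome then satisfies
\[
Hr=\sum_{i=1}^{t}(r_{j_i}-r_{j_i+1})(h_{j_i}-h_{j_i+1})
\]
for unknown transposition positions $j_1,\dots,j_t$, and the Vandermonde form of $h_j$ reduces the recovery of the $j_i$ to a structured sparse problem that I expect to be solvable in polynomial time by a Berlekamp--Massey-style key-equation argument. A simpler fallback with polynomial complexity in $n$ for any fixed $t$ is to enumerate the Kendall ball of radius $t$ about the identity, a set of size $O(n^t)$, and for each $\pi$ in it test by interpolation over $\ff_q$ whether $\pi^{-1}r$ is the evaluation vector of a linearized polynomial of degree $\le\nu$.

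The hard part is uniqueness of the decoded codeword. Generic analysis yields only minimum Kendall distance $\ge t+1$ for $\cC_\tau$, whereas unambiguous $t$-Kendall-error correction requires $\ge 2t+1$. I would try to close this gap using the $\ff_p$-linearity of the polynomials in $\cP_t$: the difference of any two of them has zero set forming an $\ff_p$-subspace of $\ff_q$, so the agreement positions of two distinct codewords carry rigid algebraic structure, and the disagreement pattern is far from a generic $\ge 2t+2$ support. The plan is to leverage this subspace structure---together with the observation that the received-word syndrome is an $\ff_q$-linear combination of the specific vectors $h_{j_i}-h_{j_i+1}$---to rule out the possibility that two distinct codewords can have mutual Kendall distance $\le 2t$, which would be the main technical step of the proof.
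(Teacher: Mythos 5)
Your size argument is fine, but the decoding and uniqueness parts have a genuine gap, and it is exactly the gap the paper's proof is designed to close. The missing idea is the accumulator (prefix-sum) transform: let $P$ be the $n\times n$ lower-triangular all-ones matrix. A single adjacent transposition changes $\sigma$ by a vector $(0,\dots,0,\theta,-\theta,0,\dots,0)$, so it changes $P\sigma^T$ in exactly \emph{one} coordinate; hence $t$ Kendall errors produce at most $t$ Hamming errors in the accumulated domain, with no factor-of-two loss. The second ingredient is the additivity of linearized polynomials, $\cL(a+b)=\cL(a)+\cL(b)$: it gives $P\sigma^T=(\cL(\beta_0),\dots,\cL(\beta_{q-2}))$ with $\beta_i=\sum_{j\le i}\alpha^j$, and since the $\beta_i$ are distinct and nonzero, $P\sigma^T$ is again the evaluation of a polynomial of degree $\le n-2t-1$, i.e.\ a codeword of an RS code of minimum distance $2t+1$. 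Decoding is then: accumulate, run a standard RS decoder against $\le t$ Hamming errors (polynomial in $n$ for \emph{all} $t$, not just fixed $t$), and invert $P$ by differencing. Uniqueness comes for free: since $\dist_\tau(\sigma,\pi)\ge \wt(P(\sigma-\pi)^T)$ and accumulated codewords lie in a distance-$2t+1$ RS code, the minimum Kendall distance of $\cC_\tau$ is at least $2t+1$ --- precisely the statement you left as ``the main technical step.''

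By contrast, the route you sketch does not go through as stated. Your syndrome identity $H r=\sum_i (r_{j_i}-r_{j_i+1})(h_{j_i}-h_{j_i+1})$ is not valid for $t>1$: Kendall errors are \emph{successive} adjacent transpositions, so when their supports interact the dipole values are intermediate quantities not readable off the received word $r$, and positions may repeat; so the ``structured sparse problem'' is not the one you wrote down, and the Berlekamp--Massey-style key equation is left entirely speculative. Your fallback of enumerating the Kendall ball of radius $t$ is polynomial only for constant $t$ (the theorem and the paper's decoder impose no such restriction) and in any case still presupposes the unproven bound $d_\tau(\cC_\tau)\ge 2t+1$. The $\ff_p$-subspace structure of zero sets of differences of linearized polynomials, which you hope will rule out close pairs of codewords, is not developed into an argument; the clean way to obtain the distance bound is the prefix-sum observation above, which your proposal does not contain.
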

\begin{IEEEproof}
It is clear that $|\cC_\tau| = |A|$, and from Lemma \ref{lem:linperm}
$|A| \ge q^{\lfloor\log_p (n-2t-1)\rfloor}.$

Let $\sigma = (a_1, a_2, \ldots, a_i, a_{i+1},\ldots, a_n),$ where
$a_j \in \ff_q^\ast, 1\le j \le n,$ be a permutation in $\cX_n$ (with
the implied bijection between $[n]$ and $\ff_q^\ast$) and let $\sigma'
= (a_1, a_2, \ldots, a_{i+1}, a_i, \ldots, a_n)$ be a permutation
obtained from $\sigma$ by one Kendall step (an adjacent
transposition).  We have
   $$
 \sigma - \sigma' = (0, \ldots, 0 , \theta, -\theta,\ldots, 0)
  $$
where $\theta = a_i -a_{i+1} \in \ff_q^\ast.$

 Let
$$
P =\left(
  \begin{array}{cccccc}
    1 & 0 & 0 & \cdot & \cdot & 0 \\
    1 & 1 & 0 & \cdot & \cdot & 0 \\
    1 & 1 & 1 & \cdot & \cdot & 0 \\
    \cdot & \cdot & \cdot & \cdot & \cdot & \cdot \\
    \cdot & \cdot & \cdot & \cdot & \cdot & \cdot \\
    1 & 1 & 1 & \cdot & \cdot & 1 \\
  \end{array}
\right)
$$
be an $n\times n$ matrix.
Note that
  $$
  P(\sigma - \sigma')^T = (0,\ldots,0,\theta, 0,\ldots,0)^T.
  $$
This means that multiplication by the accumulator matrix $P$ converts one adjacent
transposition error into one Hamming error.
Extending this observation, we claim that if $\dist_\tau(\sigma,\pi)\le t$
with $\pi$ being some permutation, and any $t \le \frac{n}{2},$ then
the Hamming weight of the vector $P(\sigma -\pi)^T$ is not more than $t.$ Here
we again take $\sigma$ and $\pi$ to be vectors with elements from $\ff_q^\ast$
with the implied bijection between $[n]$ and $\ff_q^\ast$.

Now let $\mathcal{L}(x)$ be a linearized permutation polynomial and
let $1, \alpha, \alpha^2,\ldots, \alpha^{q-2}$ be the elements of $\ff_q^\ast$
for some choice of the primitive element $\alpha.$
 Let
  $$
\sigma = (\mathcal{L}(1), \mathcal{L}(\alpha),
 \mathcal{L}(\alpha^2), \ldots, \mathcal{L}(\alpha^{q-2})).
  $$
Since $\cL(a+b)=\cL(a)+\cL(b)$, we have
\begin{eqnarray*}
P\sigma^T =
(\mathcal{L}(\beta_0), \mathcal{L}(\beta_1),
 \mathcal{L}(\beta_2), \ldots, \mathcal{L}(\beta_{q-2}))^T
\end{eqnarray*}
where
   $$
    \beta_i=\sum_{j=0}^i \alpha^{j}, \quad i=0,1,\dots,q-2.
   $$
\remove {(\mathcal{L}(0), \mathcal{L}(0+1), \mathcal{L}(0+1+\alpha),
  \ldots, \mathcal{L}(0+1+\ldots+\alpha^{n-2}))^T\\
 &=& (\mathcal{L}(0), \mathcal{L}(\beta_0), \mathcal{L}(\beta_1),
 \mathcal{L}(\beta_2), \ldots, \mathcal{L}(\beta_{n-2}))^T
\end{eqnarray*}}

It is clear that $\beta_i \ne 0$, $0\le i\le n-1$ and also
$\beta_{i_1} \ne \beta_{i_2}$ for $0\le i_1 < i_2\le n-1,$
Therefore, the vector $P\sigma^T$ is a permutation of the elements
of $\ff_q^\ast.$
At the same time, it is
the evaluation vector of a polynomial
of degree $\le n-2t-1.$
We conclude that the set $\{P\sigma^T, \sigma\in A\}$
is a subset of vectors of an (extended) Reed-Solomon
code of length $n,$ dimension $n-2t$ and distance $2t+1.$
 Any $t$ errors in a codeword of such a code
can be corrected by standard RS decoding algorithms in polynomial time.

The following decoding algorithm of the code $\cC_\tau$
corrects any $t$ Kendall errors. Suppose $\sigma\in A$ is read off
from memory as $\sigma_1.$

\vspace*{.1in}{\em Decoding algorithm (Construction I):}
\begin{itemize}
\item
Evaluate $\bfz=P\sigma_1^T$.
\item Use a
Reed-Solomon decoding algorithm to correct up to $t$ Hamming errors in
the vector $\bfz,$ obtaining a vector $\bfy$ (if the Reed-Solomon decoder
returns no results, the algorithm detects more than $t$ errors).
\item Compute $\sigma=P^{-1}\bfy^T,$ i.e.,
  $$
  \sigma_i=y_{i+1}-y_i, \;\;1\le i\le n-1; \;\;\sigma_n=y_n.
  $$
\end{itemize}
The correctness of the algorithm follows from the construction. Namely,
if $\dist_\tau(\sigma,\sigma_1)\le t,$ then $\bfy$ corresponds to a
transformed version of $\sigma,$ i.e., $\bfy=P\sigma^T.$ Then the last step
of the decoder correctly identifies the permutation $\sigma.$
\end{IEEEproof}

Some examples of code parameters arising from this theorem are given in
Sect.~\ref{sect:examples}.

We note an earlier use of permutation polynomials for constructing permutation
codes in \cite{chu04}. At the same time, since the coding problem considered in that
paper relies on the Hamming metric rather than the Kendall tau distance, its results
have no immediate link to the above construction.

\section{Construction II: Rank modulation codes from the Gray map}\label{sect:h}
In this section we present constructions of rank modulation codes using
a weight-preserving embedding of the Kendall space $\cX_n$ into a subset of integer
vectors. To evaluate the error-correcting capability of
the resulting codes, we
further link codes over integers with codes correcting Hamming errors.

\subsection{From permutations to inversion vectors}
We begin with a description of basic properties of the distance $\dist_\tau$ such
as its relation to the number of inversions in the permutation, and
weight-preserving embeddings of $S_n$ into other metric spaces. Their
proofs and a detailed discussion are found for instance in the books
by Comtet \cite{com74} or Knuth \cite[Sect. 5.1.1]{K1973}.

The distance $\dist_\tau$ is a right-invariant metric which
means that
$\dist_\tau(\sigma_1,\sigma_2)=\dist_\tau(\sigma_1\sigma,\sigma_2\sigma)$ for
any $\sigma,\sigma_1,\sigma_2\in S_n$ where the operation is the usual
multiplication of permutations. Therefore, we can define the weight of the permutation
$\sigma$ as its distance to the identity permutation $e=(1,2,\dots,n).$
%

Because of the invariance, the Cayley graph of $S_n$
(i.e., the graph whose vertices are indexed by the permutations and whose edges connect
permutations one Kendall step apart) is regular of degree $n-1.$ At
the same time it is not distance-regular, and so the machinery of
algebraic combinatorics does not apply to the analysis of the code
structure. The diameter of the space $\cX_n$ equals $N\triangleq\binom n2$
and is
realized by pairs of opposite permutations such as $(1,2,3,4)$ and
$(4,3,2,1).$

The main tool to study properties of $\dist_\tau$ is provided by the
inversion vector of the permutation.  An {\it inversion} in a permutation
$\sigma\in S_n$ is a pair $(i,j)$ such that $i>j$ and
$\sigma^{-1}(j)>\sigma^{-1}(i).$
It is easy to see that
$\dist_\tau(\sigma,e)=I(\sigma)$, the total number of inversions in
$\sigma.$ Therefore, for any two permutations $\sigma_1, \sigma_2$ we
have
$\dist_\tau(\sigma_1,\sigma_2)=I(\sigma_2\sigma_1^{-1})=I(\sigma_1\sigma_2^{-1}).$
In other words,
\begin{eqnarray*}
\dist_{\tau}(\sigma,\pi) &=& |\{(i,j)\in [n]^2 : i\ne j, \pi^{-1}(i) > \pi^{-1}(j),\\
&&\sigma^{-1}(i) < \sigma^{-1}(j)\}|.
\end{eqnarray*}

To a permutation $\sigma\in S_n$ we associate an {\em inversion vector}
$ \bfx_\sigma\in \cG_n\triangleq [0,1]\times [0,2] \times\dots\times [0,n-1], $ where
$\bfx_\sigma(i)=|\{j\in [n]: j<i+1, \sigma^{-1}(j)>\sigma^{-1}(i+1)\}|, i=1,\dots,n-1$.
In words, $\bfx_\sigma(i), i=1,\dots,n-1$ is the number of inversions in $\sigma$ in
which $i+1$ is the first element.
For instance, we have

\vspace*{.05in}
       \begin{tabular}{c@{\hspace*{.5in}}c}
         $\sigma$                 & $\bfx_\sigma$\\
      2 1 6 4 3 7 5 9 8 &1 0 1 0 3 1 0 1
      \end{tabular}
\vspace*{.05in}

\nd It is well known that the mapping
from permutations to the space of inversion vectors is bijective, and
any permutation can be easily reconstructed from its inversion
vector\footnote{There is more than one way to count inversions and to define
the inversion vector: for instance, one can define $\bfx_\sigma(i)=
|\{j: j\le i, \sigma(j)>\sigma(i+1)\}|, i=1,\dots,n-1$.
In this case, given $\sigma=(2,1,6,4,3,7,5,9,8)$  we would have
$\bfx_\sigma=(1,0,2,1,2,0,0,1)$.
The definition in the main text is better suited to our needs in that it supports
Lemma \ref{lemma:wim} below.}.
Clearly,
\begin{equation}\label{inversion_1}
    I(\sigma) =  \sum_{i=1}^{n-1}\bfx_{\sigma}(i).
\end{equation}
Denote by $J:\cG_n\to S_n$ the inverse map from $\cG_n$ to $S_n$, so that
$J(\bfx_\sigma)=\sigma.$
The correspondence
between inversion vectors and permutations was used in \cite{jia10a}
to construct rank modulation codes that correct one error.

For the type of errors that we consider below we introduce the
following $\ell_1$ distance function on $\cG_n:$
    \begin{equation}\label{eq:d}
    \dist_1(\bfx,\bfy)=\sum_{i=1}^{n-1}|\bfx(i)-\bfy(i)|,    \qquad(\bfx,\bfy\in \cG_n)
   \end{equation}
   where the computations are performed over the integers, and write
   $\|\bfx\|$ for the corresponding weight function (this is not a
   properly defined norm because $\cG_n$ is not a linear
   space). Recall that $d_\tau(\sigma,\pi)=I(\pi \sigma^{-1})$; hence the
relevance of the $\ell_1$ distance for our problem.
For instance, let $\sigma_1=(2,1,4,3), \sigma_2=(2,3,4,1),$
   then $\bfx_{\sigma_1}=(1,0,1), \bfx_{\sigma_2}=(1,1,1).$ To compute the distance
   $\dist_\tau(\sigma_1,\sigma_2)$ we note that $\sigma_1^{-1}=\sigma_1$ and so
    $$
     I(\sigma_2\sigma_1^{-1})=I((1,4,3,2))=\|(0,1,2)\|=3.
   $$
Observe that the mapping $\sigma\to\bfx_\sigma$
   is a weight-preserving bijection between $\cX_n$ and the set
   $\cG_n$.
At the same time, the above example shows that this mapping is not distance
preserving. Indeed, $d_\tau(\sigma_1,\sigma_2)=3$ while $d_1(\bfx_{\sigma_1},\bfx_{\sigma_2})=1.$
 However, a weaker property pointed out in \cite{jia10a} is true, namely:
\begin{lemma}\label{lemma:wim} Let $\sigma_1,\sigma_2\in S_n,$ then
   \begin{equation}\label{eq:dd}
     \dist_{\tau}(\sigma_1,\sigma_2)\ge \dist_1(\bfx_{\sigma_1},\bfx_{\sigma_2}).
   \end{equation}
\end{lemma}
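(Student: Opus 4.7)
My plan is to prove the inequality by decomposing a Kendall-minimal path from $\sigma_1$ to $\sigma_2$ into its $d_\tau(\sigma_1,\sigma_2)$ adjacent transpositions, and showing that each such step changes the inversion vector by exactly $\pm 1$ in a single coordinate. Then $\ell_1$-triangle inequality over this sequence of steps immediately yields $d_1(\bfx_{\sigma_1},\bfx_{\sigma_2})\le d_\tau(\sigma_1,\sigma_2)$.

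The heart of the argument is the local step. Suppose $\sigma'$ is obtained from $\sigma$ by swapping the elements in adjacent positions $k$ and $k+1$; call them $a=\sigma(k)$ and $b=\sigma(k+1)$. Recall that $\bfx_\sigma(i)$ counts the elements $j<i+1$ that lie to the right of $i+1$ in $\sigma$, i.e.\ the number of inversions in which $i+1$ is the larger element. Without loss of generality take $a>b$ (the case $a<b$ is symmetric and merely flips a sign). The only pair whose inversion status is affected by the swap is $(a,b)$ itself, since for any other element $c\notin\{a,b\}$ its relative order with both $a$ and $b$ is unchanged. Hence exactly one inversion disappears (or is created, in the symmetric case), and that inversion is counted in $\bfx_\sigma(a-1)$. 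Thus $\bfx_{\sigma'}$ differs from $\bfx_\sigma$ only in coordinate $a-1$, by $\pm1$, giving $d_1(\bfx_\sigma,\bfx_{\sigma'})=1$.

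Assembling the pieces, let $\sigma_1=\pi_0,\pi_1,\dots,\pi_m=\sigma_2$ be a sequence of adjacent transpositions realizing $m=d_\tau(\sigma_1,\sigma_2)$. By the local step and the triangle inequality for $d_1$,
\[
d_1(\bfx_{\sigma_1},\bfx_{\sigma_2})\le \sum_{\ell=1}^m d_1(\bfx_{\pi_{\ell-1}},\bfx_{\pi_\ell})= m = d_\tau(\sigma_1,\sigma_2),
\]
which is the desired inequality.

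The only point that requires any care is the local analysis: one has to verify that the definition of $\bfx_\sigma$ given in the text (which indexes inversions by their \emph{larger} element) really does yield a clean single-coordinate change under an adjacent swap — the alternative inversion-vector convention mentioned in the paper's footnote does not have this property, so one must use the stated definition. Beyond this bookkeeping step there is no real obstacle.
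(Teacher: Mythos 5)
Your proof is correct and follows essentially the same route as the paper's: the key step in both is that an adjacent transposition changes the inversion vector (with the stated, element-indexed convention) in exactly one coordinate by $\pm 1$. The only cosmetic difference is that you conclude by applying the $\ell_1$ triangle inequality along a Kendall geodesic, whereas the paper phrases the same conclusion as a containment of edge sets between the two Cayley-type graphs on $S_n$.
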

\begin{proof}
Let $\sigma(m),\sigma(m+1)$ be two adjacent elements in a permutation
$\sigma.$ Let $i=\sigma(m),j=\sigma(m+1)$ and suppose  that $i<j.$
Form a permutation $\sigma'$ which is the same as $\sigma$
except that $\sigma'(m)=j,\sigma'(m+1)=i,$ so that $d_\tau(\sigma,\sigma')=1.$
The count of inversions for which $i$ is the first element is unchanged, while
the same for $j$ has increased by one. We then have
$\bfx_{\sigma'}(k)=\bfx_{\sigma}(k), k\ne j$ and $\bfx_{\sigma'}(j)=\bfx_\sigma(j)+1.$
Thus, $d_1(\bfx_{\sigma'},\bfx_{\sigma})=1,$ and the same conclusion is clearly
true if $i>j$.

Hence, if the Kendall distance between $\sigma_1$ and $\sigma_2$ is $1$ then
the $\ell_1$ distance between the corresponding inversion vectors is also $1$.
Now consider two graphs $G$ and $G'$ with the same vertex set $S_n$. In $G$
there will be an edge between two vertices if and only if the Kendall distance between them
is $1$. On the other hand there will an edge between two vertices in $G'$ if and only if the
$\ell_1$ distance between corresponding inversion vectors is $1$. We have just shown that
the set of edges of $G$ is a subset of the set of edges of $G'$. The Kendall distance between two permutations
is the minimum distance between them in the graph $G$. A similar statement is true for
the $\ell_1$ distance with the graph $G'.$

This proves the lemma.
\end{proof}

We conclude as follows.
\begin{proposition} If there exists a code $\cC$ in $\cG_n$ with $\ell_1$
distance $d$ then the set $\cC_\tau:=\{J(\bfx): \bfx \in \cC\}$ forms
a rank modulation code in $S_n$ of cardinality $|\cC|$
with
Kendall distance at least $d.$
\end{proposition}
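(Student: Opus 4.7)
The proposition is an essentially immediate corollary of Lemma \ref{lemma:wim} together with the bijectivity of the inversion vector map, so the plan is short.

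First I would verify the cardinality claim. Since the map $\sigma \mapsto \bfx_\sigma$ is a bijection between $S_n$ and $\cG_n$, its inverse $J: \cG_n \to S_n$ is also a bijection, hence injective on $\cC \subseteq \cG_n$. Therefore $|\cC_\tau| = |\{J(\bfx) : \bfx \in \cC\}| = |\cC|$.

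Next I would establish the distance bound. Take any two distinct elements $\bfx, \bfy \in \cC$ and set $\sigma_1 = J(\bfx)$, $\sigma_2 = J(\bfy)$. Since $J$ inverts the inversion-vector map, we have $\bfx_{\sigma_1} = \bfx$ and $\bfx_{\sigma_2} = \bfy$. By Lemma \ref{lemma:wim},
\[
\dist_\tau(\sigma_1, \sigma_2) \;\ge\; \dist_1(\bfx_{\sigma_1}, \bfx_{\sigma_2}) \;=\; \dist_1(\bfx, \bfy) \;\ge\; d,
\]
where the last inequality holds because $\cC$ has $\ell_1$ distance $d$. Since $\sigma_1, \sigma_2$ were arbitrary distinct codewords of $\cC_\tau$, the minimum Kendall distance of $\cC_\tau$ is at least $d$.

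There is no real obstacle here: Lemma \ref{lemma:wim} already did the substantive work of comparing the two metrics, and what remains is only to package the inequality into the language of codes. The only thing worth stating explicitly for the reader is that the inequality in Lemma \ref{lemma:wim} goes in the favorable direction (the Kendall distance dominates the $\ell_1$ distance on inversion vectors), so pulling codes back through $J$ preserves or improves minimum distance rather than degrading it.
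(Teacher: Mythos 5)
Your proposal is correct and matches the paper's approach: the paper states the proposition as an immediate consequence of Lemma \ref{lemma:wim} together with the bijectivity of the inversion-vector map, which is exactly the argument you give.
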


\subsection{From inversion vectors to the Hamming space via Gray Map}
We will need the {\em Gray map} which is a mapping $\phi_s$ from the
ordered set of integers $[0,2^s-1]$ to $\{0,1\}^s$ with the property
that the images of two successive integers differ in exactly one bit.
Suppose that $b_{s-1} b_{s-2} \ldots b_0$, $b_i \in \{0,1\}, 0\le
i<s,$ is the binary representation of an integer $u \in
[0,2^s-1]$. Set by definition $b_s=0$ and define
$\phi_s(u)=(g_{s-1},g_{s-2}, \ldots, g_0),$ where
  \begin{equation}\label{eq:gray}
  g_j = (b_j +b_{j+1}) \pmod 2 \qquad(j=0, 1, \ldots s-1)
  \end{equation}
(note that for $s\ge 4$ there are several ways of defining maps from
integers to binary vectors with the required property).

\vspace*{.1in} {\em Example:} The Gray map for the first 10 integers looks as
follows:
{\small $$
\begin{array}{c@{\hspace*{2pt}}}
    0|\\
    1|\\
    2|\\
    3|\\
    4|\\
    5|\\
    6|\\
    7|\\
    8|\\
    9|\\
      \vdots
\end{array} \quad
\begin{array}{*{8}{c@{\hspace*{2pt}}}}
    0&0&0&0&0&0&0&0\\
    0&0&0&0&0&0&0&1\\
    0&0&0&0&0&0&1&0\\
    0&0&0&0&0&0&1&1\\\hline
    0&0&0&0&0&1&0&0\\
    0&0&0&0&0&   1 &0&1\\
    0&0&0&0&0&   1&1&0\\
    0&0&0&0&0&   1&1&1\\\hline
    0&0&0&0&    1&0&0&0\\
    0&0&0&0&   1&0&0&1\\
      &&&\vdots
\end{array}
\quad {\mathbf \longrightarrow}\quad
\begin{array}{*{8}{c@{\hspace*{2pt}}}}
0&0&0&0&0&0&0&0\\
0&0&0&0&0&0&0&1\\
0&0&0&0&0&0&1&1\\
0&0&0&0&0&0&1&0\\\hline
0&0&0&0&0&1&1&0\\
0&0&0&0&0&1&1&1\\
0&0&0&0&0&1&0&1\\
0&0&0&0&0&1&0&0\\\hline
0&0&0&0&1&1&0&0\\
0&0&0&0&1&1&0&1\\
      &&&\vdots
\end{array}
$$}
\nd Note the ``reflective'' nature of the map: the last 2 bits of the second
block of four are a reflection of the last 2 digits of the first block with respect
to the horizontal line; the last 3 bits of the second block of eight
follow a similar rule, and so on. This property, easy to prove from \eqref{eq:gray}, will be
put to use below (see Prop.~\ref{prop:reflect}).

\vspace*{.1in}
Now, for $i=2,\ldots, n,$ let
$$m_i=\lfloor\log {i}\rfloor,$$ and let
   $$
\psi_i:\{0,1\}^{m_i} \to [0,i-1]
   $$
be the inverse Gray map $\psi_i=\phi_i^{-1}.$ Clearly $\psi_i$ is well defined;
it is injective but not surjective for most $i$'s since the size of its domain is only $2^{m_i}$.
\begin{proposition}
Suppose that $\bfx, \bfy \in \{0,1\}^{m_i}.$ Then
    \begin{equation}\label{eq:lh}
  |\psi_i(\bfx) -\psi_i(\bfy)| \ge \dist_H(\bfx, \bfy),
    \end{equation}
where $\dist_H$ denotes the Hamming distance.
\end{proposition}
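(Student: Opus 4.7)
My plan is to exploit the defining property of the Gray map---that consecutive integers are sent to binary vectors at Hamming distance exactly $1$---and to derive \eqref{eq:lh} from a triangle inequality along the chain of integers between $\psi_i(\bfx)$ and $\psi_i(\bfy)$.

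In detail, set $u=\psi_i(\bfx)$ and $v=\psi_i(\bfy)$, both integers in $[0,2^{m_i}-1]\subseteq[0,i-1]$, and assume without loss of generality that $u\le v$. Let $k=v-u=|\psi_i(\bfx)-\psi_i(\bfy)|$. Consider the chain of integers $u,u+1,\ldots,v$ and their images under the Gray map,
$$\bfx=\phi_{m_i}(u),\;\phi_{m_i}(u+1),\;\ldots,\;\phi_{m_i}(v)=\bfy.$$
By the Gray-code property each consecutive pair of images is at Hamming distance $1$, so the triangle inequality for $\dist_H$ yields $\dist_H(\bfx,\bfy)\le k$, which is exactly \eqref{eq:lh}.

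The only real piece of work is the Gray-code property itself, which I would verify directly from \eqref{eq:gray}. Let $j^\ast$ be the position of the lowest $0$-bit of $u$, using the convention $b_s=0$; then the binary expansions of $u$ and $u+1$ agree in every position above $j^\ast$, disagree at position $j^\ast$, and disagree in every position $0,\ldots,j^\ast-1$ (the carry bits). Substituting into $g_j=b_j+b_{j+1}\pmod 2$ and splitting on the three cases $j>j^\ast$, $j=j^\ast$, and $j<j^\ast$, a short case check shows that the Gray images of $u$ and $u+1$ coincide in every coordinate except $j^\ast$. Beyond this bookkeeping there is no real obstacle: the main inequality reduces to a one-line chain argument once the Gray-code property is in hand.
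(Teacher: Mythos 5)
Your proof is correct and follows essentially the same route as the paper: both use the Gray-code property that consecutive integers map to vectors at Hamming distance $1$ and then apply the triangle inequality along the chain $u,u+1,\dots,v$ to get $\dist_H(\bfx,\bfy)\le|\psi_i(\bfx)-\psi_i(\bfy)|$. Your additional case check of the adjacency property from \eqref{eq:gray} is a correct verification of a fact the paper simply takes as known.
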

\begin{IEEEproof}
This follows from the fact that if $u,v$ are two integers such that
$|u-v|=1$, then their Gray images satisfy $d_H(\phi(u),\phi(v))=1.$
If the number are such that $u<v$ and $|u-v|=d,$ then by the triangle
inequality,
  \begin{align*}
   d_H(\phi(u),\phi(v))&\le d_H(\phi(u),\phi(u+1))\\
            &\hspace*{.4in}+\dots+d_H(\phi(v-1),\phi(v))\\
   &=d
  \end{align*}
\end{IEEEproof}

Consider a vector $\bfx=(\bfx_2| \bfx_3|\ldots |\bfx_n),$ where
$\bfx_i \in \{0,1\}^{m_i},$ $i=2,\ldots,n.$
The dimension of $\bfx$ equals $m=\sum_{i=2}^n m_i\approx \log n!,$ or more precisely
\begin{eqnarray*}
m &=& \sum_{j=1}^{m_n -1} (2^{j+1}-2^j) j +
m_n (n+1-2^{m_n})\\
&=& \sum_{j=1}^{m_n -1} j2^j +
m_n (n+1-2^{m_n})\\
&=& (m_n-2)2^{m_n}+2 +
m_n (n+1-2^{m_n})\\
&=& (n+1)m_n -2^{m_n+1} +2.
\end{eqnarray*}
On the first line of this calculation we used the fact that among the numbers $m_i$ there
are exactly $2^{j+1}-2^j$ numbers equal to $j$
for all $j\le n-1,$ namely those with $i=2^j,2^{j}+1,\dots,2^{j+1}-1.$
The remaining $(n+1)-2^{m_n}$ numbers equal $m_n.$

For a vector $\bfx\in \{0,1\}^m$ let
  $$
  \Psi(\bfx)=\Psi(\bfx_2| \bfx_3|\ldots |\bfx_n)=(\psi_2(\bfx_2),\dots,\psi_n(\bfx_n)).
  $$
  \begin{proposition} \label{prop:1h} Let $\bfx, \bfy \in \{0,1\}^m.$ Then
  $$
\dist_{1}(\Psi(\bfx),\Psi(\bfy)) \ge \dist_H(\bfx,\bfy),
  $$
where the distance $\dist_1$ is the $\ell_1$ distance defined in (\ref{eq:d}).
  \end{proposition}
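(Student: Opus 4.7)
The plan is to reduce the proposition to a block-by-block application of the previous inequality \eqref{eq:lh}. By construction both sides of the claimed inequality are sums indexed by the blocks $i=2,\dots,n$, so once the inequality is established in each block the global inequality follows by summation.

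More precisely, I would first write out both distances in their block decompositions. On the one hand, since $\bfx$ and $\bfy$ are concatenations of the sub-blocks $\bfx_i,\bfy_i\in\{0,1\}^{m_i}$, we have
\begin{equation*}
\dist_H(\bfx,\bfy)=\sum_{i=2}^{n}\dist_H(\bfx_i,\bfy_i).
\end{equation*}
On the other hand, by the very definition of $\Psi$ and of $\dist_1$ in \eqref{eq:d},
\begin{equation*}
\dist_1(\Psi(\bfx),\Psi(\bfy))=\sum_{i=2}^{n}|\psi_i(\bfx_i)-\psi_i(\bfy_i)|.
\end{equation*}

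Next, I would invoke the previous proposition applied to each pair $(\bfx_i,\bfy_i)\in\{0,1\}^{m_i}\times\{0,1\}^{m_i}$, which gives $|\psi_i(\bfx_i)-\psi_i(\bfy_i)|\ge \dist_H(\bfx_i,\bfy_i)$ for every $i=2,\dots,n$. Summing these inequalities over $i$ yields the desired bound
\begin{equation*}
\dist_1(\Psi(\bfx),\Psi(\bfy))\ge \dist_H(\bfx,\bfy).
\end{equation*}

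There is essentially no obstacle: the work has already been done in the scalar inequality \eqref{eq:lh}, and the additivity of both $\dist_H$ and $\dist_1$ under concatenation makes the passage from blocks to the full vectors immediate. The only thing to be mildly careful about is that the same partitioning (block sizes $m_2,m_3,\dots,m_n$) is used on both sides of the inequality, which is built into the definition of $\Psi$ and of the ambient space $\cG_n=[0,1]\times[0,2]\times\cdots\times[0,n-1]$.
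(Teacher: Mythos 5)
Your proposal is correct and follows essentially the same argument as the paper: decompose both distances over the blocks $\bfx_i,\bfy_i\in\{0,1\}^{m_i}$, apply inequality \eqref{eq:lh} in each block, and sum over $i=2,\dots,n$. Nothing is missing.
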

\begin{IEEEproof} Using \eqref{eq:lh}, we obtain
\begin{eqnarray*}
\dist_{1}(\Psi(\bfx),\Psi(\bfy))&
=& \sum_{i=2}^{n}|\psi_i(\bfx_i)-\psi_i(\bfy_i)| \\
&\ge & \sum_{i=2}^{n} \dist_H(\bfx_i,\bfy_i)\\
&=& \dist(\bfx,\bfy).
\end{eqnarray*}
\end{IEEEproof}

\subsection{The code construction: correcting up to $O(n\log n)$ number of errors}
Now we can formulate a general method to construct rank modulation codes.
We  begin with a binary code $\cA$ of length $m$ and cardinality $M$ in the Hamming space.

\vspace*{.1in} {\em Encoding algorithm (Construction II):} \begin{itemize}
\item Given a message $\bfm$ encode it with the code $\cA$. We obtain a vector
$\bfx\in\{0,1\}^m.$
\item Write $\bfx=(\bfx_2|\bfx_3|\dots|\bfx_n),$ where $\bfx_i\in \{0,1\}^{m_i}.$
\item Evaluate $\pi=J(\Psi(\bfx))$
\end{itemize}
This algorithm is of essentially the same complexity as the encoding of the code
$\cA,$ and if this latter code is linear, is easy to implement,
Both $J$ and $\Psi$ are injective, so the cardinality of the resulting code
is $M$. Moreover, each of the two mappings can only increase the distance (namely, see
\eqref{eq:dd} and the previous Proposition).
Summarizing, we have the following statement.
  \begin{theorem}\label{thm:main}
  Let $\cA$ be a binary code of length
  $$
m=(n+1)\lfloor\log n\rfloor-
   2^{\lfloor \log n\rfloor+1}+2,
  $$
cardinality $M$ and Hamming  distance $d$.
 Then the set of permutations
  $$
    \cC_\tau=\big\{\pi\in S_n : \pi=J(\Psi(\bfx)), \bfx\in \cA\big\}
  $$
forms a rank modulation code on $n$ elements of cardinality $M$ with
distance at least $d$ in the Kendall space $\cX_n$.
  \end{theorem}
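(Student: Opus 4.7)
The plan is to verify three things in sequence: that the length parameter $m$ is exactly the dimension of the target inversion-vector space, that the map $\bfx \mapsto J(\Psi(\bfx))$ is injective (so $|\cC_\tau| = M$), and that this map is non-expanding from the Hamming metric on $\{0,1\}^m$ to the Kendall metric on $\cX_n$.

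The first point is already dispatched by the calculation done just before the theorem statement: decomposing $\{2,\dots,n\}$ according to the value of $m_i = \lfloor \log i\rfloor$ gives $m=\sum_{i=2}^n m_i = (n+1)\lfloor \log n\rfloor - 2^{\lfloor\log n\rfloor+1}+2$. This is precisely the length for which a binary string $\bfx \in \{0,1\}^m$ admits a decomposition $\bfx = (\bfx_2 \mid \bfx_3 \mid \dots \mid \bfx_n)$ with $\bfx_i \in \{0,1\}^{m_i}$, so $\Psi(\bfx)$ is well defined and lies in $\cG_n$.

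For cardinality, I would observe that each component $\psi_i = \phi_i^{-1}$ is injective on $\{0,1\}^{m_i}$ (being the inverse of a map from $[0,i-1]$ into $\{0,1\}^{m_i}$), so $\Psi$ is injective on $\{0,1\}^m$. Since $J: \cG_n \to S_n$ is a bijection (the inversion-vector correspondence), the composition $J \circ \Psi$ restricted to $\cA$ is injective, giving $|\cC_\tau| = |\cA| = M$.

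The distance bound is obtained by chaining the two inequalities already established. For $\bfx, \bfy \in \cA$, set $\sigma_1 = J(\Psi(\bfx))$ and $\sigma_2 = J(\Psi(\bfy))$, so that by construction $\bfx_{\sigma_1} = \Psi(\bfx)$ and $\bfx_{\sigma_2} = \Psi(\bfy)$. Lemma \ref{lemma:wim} yields
\[
\dist_\tau(\sigma_1,\sigma_2) \;\ge\; \dist_1(\bfx_{\sigma_1}, \bfx_{\sigma_2}) \;=\; \dist_1(\Psi(\bfx),\Psi(\bfy)),
\]
and Proposition \ref{prop:1h} then gives $\dist_1(\Psi(\bfx),\Psi(\bfy)) \ge \dist_H(\bfx,\bfy) \ge d$. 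Concatenating proves that the Kendall distance between any two distinct codewords of $\cC_\tau$ is at least $d$. There is no real obstacle here—the entire argument is a bookkeeping composition of the two metric-nonexpansion properties proved in the two preceding subsections with the bijectivity of $J$; the only place one has to be careful is the identification $\bfx_{J(\Psi(\bfx))} = \Psi(\bfx)$, which is immediate from the definition of $J$ as the inverse of $\sigma \mapsto \bfx_\sigma$.
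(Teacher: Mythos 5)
Your proposal is correct and follows the same route the paper takes: the paper's justification is precisely that $J$ and $\Psi$ are injective (giving cardinality $M$) and that the two maps cannot decrease distance, chaining Lemma~\ref{lemma:wim} with Proposition~\ref{prop:1h} to get $\dist_\tau \ge \dist_1 \ge \dist_H \ge d$. Your added bookkeeping (the length computation for $m$ and the identification $\bfx_{J(\Psi(\bfx))}=\Psi(\bfx)$) is consistent with the paper, just spelled out more explicitly.
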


The resulting rank modulation code $\cC_\tau$ can be decoded to correct
any $t = \lfloor(d-1)/2\rfloor$ Kendall errors if $t$ Hamming errors are correctable
with a decoding algorithm of the binary code $\cA.$
Namely, suppose that $\sigma'$ is the permutation that represents a corrupted
memory state. To recover the data we perform the following steps.

\vspace*{.1in}{\em Decoding algorithm (Construction II):}
\begin{itemize}
\item Construct the inversion vector $\bfx_{\sigma'}.$ Form a new inversion vector
$\bfy$ as follows. For $i=2,\dots,n$,
if $ \bfx_{\sigma'}(i-1) \in [0,i-1]$ is greater than $2^{m_i} -1$ then
put $\bfy_{\sigma'}(i) = 2^{m_i} -1,$ else put $\bfy_{\sigma'}(i)= \bfx_{\sigma'}(i)$.
\item Form a vector $\bfy\in\{0,1\}^m, \bfy=(\bfy_2|\bfy_3|\dots|\bfy_n)$
where $\bfy_i\in\{0,1\}^{m_i}$ is given by $\phi_i(\bfy_{\sigma'}(i)).$
\item Apply the $t$-error-correcting
decoding algorithm of the code $\cA$ to $\bfy.$ If the decoder
returns no result, the algorithm detects more than $t$ errors.
Otherwise suppose that $\bfy$ is decoded as $\bfx.$

\item Output $\sigma=J(\Psi(\bfx)).$
\end{itemize}

\vspace*{.1in}
The correctness of this algorithm is justified as follows.
Suppose $\sigma\in \cC_\tau$ is the original permutation written into the memory,
and $d_\tau(\sigma,\sigma')\le t.$ Let $\bfx_\sigma$ be its inversion vector
and let $\bfx$ be its Gray image, i.e., a vector such that $\Psi(\bfx) =  \bfx_\sigma.$
By Lemma \ref{lemma:wim} and Prop.~\ref{prop:1h} we conclude that $d_H(\bfx,\bfy)\le t,$
and therefore the decoder of the code $\cA$ correctly recovers $\bfx$ from $\bfy$.
Therefore $\sigma'$ will be decoded to $\sigma$ as desired.

{\em Example: }
Consider a $t$-error-correcting primitive BCH code $\cA$ in the binary Hamming
space of length
$m= (n+1)\lfloor\log  n\rfloor -2^{\lfloor\log  n\rfloor+1} +2$ and designed distance $2t+1$ (generally, we will need to shorten the code to get to the desired length $m$).
The cardinality of the code satisfies
$$
M\ge \frac{2^m}{(m+1)^t}.
$$
The previous theorem shows that we can construct a set of $(n,M)$ rank modulation codes that
correct $t$ Kendall errors. Note that, by the sphere packing bound,
the size of any code $\cC\in \cX_n$ that corrects $t$ Kendall errors  satisfies
$|\cC|= O(n!/n^t)$. The rank modulation codes constructed from binary BCH codes
have size $M=\Omega(n!/(\log n!)^t)=\Omega(n!/(n^t\log^t n)).$

Specific examples of code parameters that can be obtained from the above construction
are given in Sect.~\ref{sect:examples}.

\vspace*{.1in}
{\em Remark (Encoding into permutations):} Suppose that the construction in this section is used to
encode binary messages into permutations (i.e., the code $\cA$ in the above encoding
algorithm is an identity map). We obtain an encoding procedure of binary $m$-bit
messages into permutations of $n$ symbols.
This redundancy of this encoding equals $1-m/\log (n!).$
Using the Stirling formula, we have for $n\ge 1$
  $$
   \log  n!\le \log (\sqrt {2\pi n})+n\log  n-
  \Big(n-\frac1{12 n}\Big) \log  e
  $$
(\cite{abr64}, Eq. 6.1.38).
Writing $m\ge (n+1)(\log n-1)-2n+2,$ we can estimate the redundancy
as
  $$
   1-\frac m{\log n!}\le \frac {(3-\log  e)n}{\log n!}, \quad n\ge 2.
  $$
Thus the encoding is asymptotically nonredundant. The redundancy is the
largest when $n$ is a power of 2.
It is less than 10\% for all $n\ge 69,$ less than 7\% for all $n\ge 527,$
etc.

\subsection{Correcting $O(n^{1+\epsilon})$ number of errors, $0<\epsilon<\half$}
Consider now the case when the number of errors $t$ grows with $n$.
Since the binary codes constructed above are of length about $n\log n$, we can obtain
rank modulation codes in $\cX_n$ that correct error patterns of Kendall
weight $t=\Omega(n\log n).$ But in fact more is true.
We need the following proposition.
\begin{proposition}\label{prop:reflect}
Let $\bfx, \bfy \in \{0,1\}^m.$ Then
  $$
\dist_{1}(\Psi(\bfx),\Psi(\bfy)) \ge \frac{n-1}2 \Big(2^{\frac{\dist_H(\bfx,\bfy)}{n-1}}-1\Big).
  $$
\end{proposition}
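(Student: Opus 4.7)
The plan is to combine a per-coordinate reflective Gray-code inequality with a Jensen-style averaging. Set $d_i := \dist_H(\bfx_i, \bfy_i)$ and $\delta_i := |\psi_i(\bfx_i) - \psi_i(\bfy_i)|$, so that $\sum_{i=2}^n d_i = \dist_H(\bfx, \bfy)$ and $\sum_{i=2}^n \delta_i = \dist_1(\Psi(\bfx), \Psi(\bfy))$. I would first establish the single-block bound
$$
\delta_i \geq \tfrac{1}{2}\bigl(2^{d_i} - 1\bigr).
$$
Because $f(t) = 2^t - 1$ is convex on $[0,\infty)$, Jensen's inequality applied to the $n-1$ values $d_2,\ldots,d_n$ then yields
$$
\sum_{i=2}^n (2^{d_i} - 1) \geq (n-1)\bigl(2^{\dist_H(\bfx,\bfy)/(n-1)} - 1\bigr),
$$
and multiplying by $\tfrac{1}{2}$ gives the stated proposition.

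For the per-block bound I would induct on $s := m_i$, exploiting the reflective recursion of the Gray map noted after the Gray-map example: $\phi_s(w)$ prepends the bit $0$ to $\phi_{s-1}(w)$ for $w \in [0, 2^{s-1}-1]$, and prepends the bit $1$ to $\phi_{s-1}(2^s - 1 - w)$ for $w \in [2^{s-1}, 2^s - 1]$. Setting $u := \psi_i(\bfx_i)$, $v := \psi_i(\bfy_i)$ with $u < v$: when both lie in a single half, the top Gray bits coincide, the Hamming distance is supported on the lower $s-1$ bits, and the claim reduces immediately to the inductive hypothesis for $\phi_{s-1}$. When they straddle the midpoint, writing $v' = 2^s - 1 - v \in [0, 2^{s-1}-1]$ gives $\delta_i = v - u = (2^s - 1) - (u + v')$ together with $d_i = 1 + \dist_H(\phi_{s-1}(u), \phi_{s-1}(v'))$. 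Here the reflection is what drives the doubling: an extra top-bit disagreement forces $u + v'$ to be substantially smaller than $2^s - 1$, which amplifies $v - u$ by a factor of roughly $2$ over the sub-problem value and supplies the extra $2^{d-1}$ in the bound.

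The main obstacle is precisely this opposite-halves inductive case: the statement one really needs to propagate is an \emph{upper} bound on $u + v'$, and not just a lower bound on $|u - v'|$, so the naive hypothesis $|u-v| \geq \tfrac{1}{2}(2^d - 1)$ does not close on itself. The remedy I would pursue is to strengthen the inductive hypothesis into a joint pair of bounds---simultaneously a lower bound of the form $|u-v| \geq \tfrac{1}{2}(2^d - 1)$ and the complementary upper bound $u + v \leq (2^s - 1) - \tfrac{1}{2}(2^d - 1)$ for pairs obtained under the reflection $v \mapsto 2^s - 1 - v$---so that each step of the reflective recursion interconverts the two. Once this reinforced invariant is in place, the induction closes and feeds Step~2 with the clean per-block estimate, from which the proposition follows.
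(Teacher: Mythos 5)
Your overall skeleton is exactly the paper's: a per-block estimate driven by the reflective structure of the Gray map, followed by a Jensen/AM--GM step over the $n-1$ blocks (your averaging step is identical to the paper's, and your repackaging $\delta_i\ge\tfrac12(2^{d_i}-1)$, valid also for $d_i=0$, is a harmless variant of the paper's $\delta_i\ge 2^{d_i-1}$ plus its separate $(n-1)/2$ correction for the zero-weight blocks). The genuine gap is that the per-block inequality you set out to prove is false for the Gray map \eqref{eq:gray}. Take $m_i=3$, $\bfx_i=\phi_3(1)=(0,0,1)$, $\bfy_i=\phi_3(4)=(1,1,0)$: then $\dist_H(\bfx_i,\bfy_i)=3$ but $|\psi_i(\bfx_i)-\psi_i(\bfy_i)|=3<\tfrac12(2^3-1)$; the pairs $(3,6)$, $(2,8)$, $(5,16),\dots$ violate it for every $d_i\ge 3$. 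In your own recursion this is precisely the opposite-halves case you flagged: for $(u,v)=(1,4)$ one has $v'=3$, $d=1+\dist_H(\phi_2(1),\phi_2(3))=3$ and $v-u=(2^3-1)-(u+v')=3$. So the obstruction you identified is not an artifact of a too-weak inductive hypothesis, and the proposed remedy cannot work: the strengthened invariant contains the false inequality $|u-v|\ge\tfrac12(2^d-1)$ as its first component, so no reinforcement of the induction can establish it.

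For calibration, the difficulty is not yours alone: the paper's proof asserts the stronger bound $2^{d_i-1}$ with only an appeal to ``the reflection property,'' and the same pairs refute it; indeed, putting the pair $(1,4)$ into every block with $m_i\ge 3$ and equal entries elsewhere gives $\dist_H(\bfx,\bfy)=\dist_1(\Psi(\bfx),\Psi(\bfy))=3(n-7)$, which for large $n$ (e.g.\ $n=100$: $279$ versus roughly $300$) is below the claimed $\tfrac{n-1}2\bigl(2^{3(n-7)/(n-1)}-1\bigr)$, so the proposition with the constant $\tfrac12$ cannot be proved by any argument. What is salvageable is a weaker exponential per-block bound: since an interval of length $L=|u-v|$ contains at most one Gray step flipping a bit in a position $j$ with $2^j\ge L$, one gets $d\le\lceil\log_2 L\rceil+1$, hence $|u-v|>2^{d-2}$ and so $\delta_i\ge\tfrac14(2^{d_i}-1)$ for all $d_i\ge 0$; feeding this into your Jensen step yields $\dist_1(\Psi(\bfx),\Psi(\bfy))\ge\tfrac{n-1}4\bigl(2^{\dist_H(\bfx,\bfy)/(n-1)}-1\bigr)$, which is all the subsequent theorem's $\Omega(n^{1+\epsilon})$ conclusion needs.
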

\begin{IEEEproof}
Assume without loss of generality that $\bfx\ne\bfy.$
We first claim that, for any such $\bfx,\bfy\in \{0,1\}^{m_i}$, the inequality
$\dist_H(\bfx,\bfy)\geq w_i\ge 1$ implies that
  $|\psi_i(x)-\psi_i(y)|\geq 2^{w_i-1}.$ This is
true because of the reflection property of the standard Gray map
as exemplified above.

Now consider vectors $\bfx =(\bfx_2| \bfx_3|\ldots |\bfx_n) , \bfy =(\bfy_2| \bfy_3|\ldots |\bfy_n)$ in $\{0,1\}^{m}$ where
$\bfx_i,\bfy_i\in \{0,1\}^{m_i}, 2\le i\le n.$
Suppose that $\dist_H(\bfx_i,\bfy_i)= w_i$
for all $i,$ and $\sum_{i=2}^{n} w_i =w$ where
$w=\dist_H(\bfx,\bfy).$

Hence,
\begin{align*}
\dist_{1}(\Psi(\bfx)&,\Psi(\bfy)) = \sum_{i=2}^{n}|\psi_i(\bfx_i)-\psi_i(\bfy_i)| \\
&\ge \sum_{i: \,w_i>0}2^{w_i-1}\\
&= \sum_{i=2}^{n}2^{w_i-1} - \sum_{i:\,w_i=0} \frac12
\end{align*}
We do not have control over the number of nonzero $w_i$'s, so let
us take the worst case. We have
  $$
 \sum_{i=2}^{n}\frac1{n-1}2^{w_i}\ge 2^{\sum_{i=2}^{n}\frac{w_i}{n-1}}
=2^{\frac w{n-1}}.
  $$
As for $\sum_{i:\,w_i=0} \frac12,$ use the trivial upper bound $(n-1)/2.$
Together the last two results conclude the proof.
\end{IEEEproof}
We have the following theorem as a result.
\begin{theorem}
  Let $\cC$ and $\cC_\tau$ be the binary and rank modulation codes defined
  in Theorem~\ref{thm:main}. Suppose furthermore that the minimum Hamming
  distance $d$ of the code $\cC$ satisfies $d=\epsilon m,$ where
  $m$ is the blocklength of $\cC$. Then the minimum Kendall distance
  of the code $\cC_\tau$ is  $\Omega(n^{1+\epsilon}).$
\end{theorem}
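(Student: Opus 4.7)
The plan is to chain together the three weight-comparison inequalities already established in this section: Lemma~\ref{lemma:wim}, which gives $d_\tau(J(\bfu),J(\bfv))\ge d_1(\bfu,\bfv)$ for inversion vectors; Proposition~\ref{prop:reflect}, which gives the exponential-in-Hamming-weight lower bound $d_1(\Psi(\bfx),\Psi(\bfy))\ge \tfrac{n-1}{2}\bigl(2^{d_H(\bfx,\bfy)/(n-1)}-1\bigr)$; and the defining relation $\pi=J(\Psi(\bfx))$ of the construction.

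First I would fix two distinct codewords $\bfx,\bfy\in\cC$ and let $\pi_\bfx,\pi_\bfy$ be their images in $\cC_\tau$. Combining the two inequalities above yields
\begin{equation*}
d_\tau(\pi_\bfx,\pi_\bfy)\;\ge\;\frac{n-1}{2}\Bigl(2^{d_H(\bfx,\bfy)/(n-1)}-1\Bigr).
\end{equation*}
Since the right-hand side is monotone increasing in $d_H(\bfx,\bfy)$, I may replace $d_H(\bfx,\bfy)$ by the hypothesis $d=\epsilon m$.

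Next I would estimate the exponent $d/(n-1)=\epsilon m/(n-1)$ using the closed form $m=(n+1)\lfloor\log n\rfloor-2^{\lfloor\log n\rfloor+1}+2$ given in Theorem~\ref{thm:main}. Since $2^{\lfloor\log n\rfloor}\le n$, one obtains $m\ge (n+1)\lfloor\log n\rfloor - 2n+2$, so
\begin{equation*}
\frac{m}{n-1}\;\ge\;\lfloor\log n\rfloor-c_1
\end{equation*}
for an absolute constant $c_1>0$. Consequently $2^{\epsilon m/(n-1)}\ge 2^{-\epsilon c_1-\epsilon}\,n^{\epsilon}=c_2\,n^{\epsilon}$ with $c_2=c_2(\epsilon)>0$, and
\begin{equation*}
d_\tau(\pi_\bfx,\pi_\bfy)\;\ge\;\frac{n-1}{2}\bigl(c_2 n^{\epsilon}-1\bigr)\;=\;\Omega(n^{1+\epsilon}).
\end{equation*}

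There is really no conceptual obstacle here — everything substantive has already been done in Lemma~\ref{lemma:wim} and Proposition~\ref{prop:reflect}; the only thing to be careful about is that the $-1$ in the parenthesis is absorbed, which is automatic for $n$ large enough since $c_2 n^{\epsilon}\to\infty$ whenever $\epsilon>0$. The mildly delicate point, and the only place a constant could be lost, is the asymptotic comparison of $m$ with $n\log n$; handling the floor and the $2^{\lfloor\log n\rfloor+1}$ term carefully is what ensures the exponent on $n$ is exactly $1+\epsilon$ rather than something weaker.
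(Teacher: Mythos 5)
Your proof is correct and takes essentially the same route as the paper's: both lower-bound $m$ by roughly $n(\log n-3)$ after handling the floor and the $2^{\lfloor\log n\rfloor+1}$ term, and then apply Proposition~\ref{prop:reflect} with $d_H(\bfx,\bfy)\ge d=\epsilon m$ to obtain the bound $\tfrac{n-1}{2}\bigl(2^{\epsilon m/(n-1)}-1\bigr)=\Omega(n^{1+\epsilon})$. The only difference is that you spell out the chaining through Lemma~\ref{lemma:wim} and the monotonicity in $d_H$, which the paper leaves implicit.
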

\begin{IEEEproof}
We have $\log n -1\le\lfloor\log n\rfloor\le\log n.$ Use this in the definition
of $m$ to obtain that $m\ge n(\log  n-3).$ Therefore,
  $d=\epsilon m \ge \epsilon n(\log  n-3).$
 From the previous proposition the
minimum Kendall distance of $\cC_\tau$ is at least
$$
\frac{n-1}2 \big(2^{\epsilon n(\log  n-3)/(n-1)}-1\big) = \Omega(n^{1+\epsilon}).
$$
\end{IEEEproof}

Examples of specific codes that can be constructed from this theorem are again
deferred to Sect.~\ref{sect:examples}.

Let us analyze the asymptotic trade-off between the rate and the distance of the codes.
We begin with an asymptotically good family of binary codes, i.e.,   a
sequence of codes $C_i, i=1,2\dots,$ of increasing length $m$
for which the rate $\log|C_i|/m$ converges to a positive number $R$, and the
relative Hamming distance behaves as $\epsilon m,$ where $0<\epsilon <\half.$
Such families of codes can be efficiently constructed by means of concatenating
several short codes into a longer binary code (e.g., \cite[Ch.~10]{MS1977})
Using this family in the previous theorem, we obtain a family of rank modulation
codes in $S_n$ of Kendall distance that behaves as $\Omega(n^{1+\epsilon}),$
and of rate $R$ (see (\ref{eq:rank_rate})).
The upper limit of $1/2$ on $\epsilon$ is due to the fact  \cite[p.~565]{MS1977} that no binary codes of large size (of positive rate) are capable of correcting a higher proportion of errors.

\subsection{Correcting even more, $O(n^{1+\epsilon})$, errors, $\half\leq\epsilon < 1$}
It is nevertheless possible to extend the above theorem to the case of
$\epsilon \ge\half,$ obtaining rank modulation codes of
distance $\Omega(n^{1+\epsilon})$, $\half\leq\epsilon < 1$ and
positive rate. However, this extension is not direct, and results in an
existential claim as opposed to the
constructive results above. To be precise, one can show that for any
$0\le \varepsilon<1,$ there exist infinite families of binary
$(m,M,d)$ codes $\cC$, with rate $R=1-\epsilon$, such that the associated
rank modulation code $\cC_{\tau}$ for permutations of $[n]$ in
Theorem~\ref{thm:main} has minimum Kendall distance
$\Omega(n^{1+\varepsilon})$.
\begin{theorem}\label{thm:existence}
  For any $0<\epsilon<1$,
  there exist infinite families of binary $(m,M)$ codes $\cC$ such that
  $(1/m)\log M \to 1 -\epsilon >0,$ and
  the associated rank modulation code $\cC_{\tau}$
  constructed in Theorem~\ref{thm:main} has minimum Kendall distance that scales as
  $\Omega(n^{1+\epsilon})$. Moreover all but an exponentially decaying
  fraction of the binary linear codes are such.
\end{theorem}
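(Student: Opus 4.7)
The plan is to apply the probabilistic method to a uniformly random binary linear subspace $\cA \subseteq \{0,1\}^m$ of dimension $k = \lfloor(1-\epsilon)m\rfloor$. Then $|\cA| = 2^k$ and $(1/m)\log|\cA| = k/m \to 1-\epsilon$, so the associated rank modulation code $\cC_\tau$ from Theorem~\ref{thm:main} has the required rate. By Lemma~\ref{lemma:wim}, the minimum Kendall distance of $\cC_\tau$ is at least $d(\cA) := \min\{d_1(\Psi\bfx, \Psi\bfy): \bfx \neq \bfy,\ \bfx,\bfy \in \cA\}$, so I would aim to show $\Pr[d(\cA) < D] = 2^{-\Omega(n)}$ for $D = c n^{1+\epsilon}$ with a sufficiently small constant $c>0$; Markov's inequality then yields both the existence statement and the ``exponentially decaying fraction'' refinement at once.

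The union bound is applied over ordered pairs $(\bfx,\bfy)$ of distinct vectors in $\cA$. For a uniformly random $k$-dimensional subspace of $\{0,1\}^m$, the probability $\Pr[\{\bfx,\bfy\}\subseteq\cA]$ is at most $2^{-\epsilon m}$ when one of $\bfx,\bfy$ is $\mathbf{0}$ and at most $2^{-2\epsilon m}$ for a generic pair of distinct nonzero vectors (since over $\mathbb{F}_2$ such a pair is automatically linearly independent). Hence the expected number of bad pairs in $\cA$ is controlled by the number of pairs $(\bfu,\bfv) \in \mathrm{Im}\,\Psi \times \mathrm{Im}\,\Psi$ with $d_1(\bfu,\bfv) \le D$, weighted by the relevant probability.

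For the counting step, I would use the stars-and-bars estimate on the $\ell_1$-ball of radius $D$ in $\cG_n$:
\[
|B_D| \;\le\; \binom{n-1+D}{n-1} \;\le\; \left(\frac{e(D+n-1)}{n-1}\right)^{n-1},
\]
which gives $\log_2 |B_D| \le \epsilon n \log n + n\log(ec) + O(\log n)$. Since $\epsilon m = \epsilon n \log n - 2\epsilon n + O(\log n)$, choosing $c$ small enough that $\log(ec)+2\epsilon < -\eta$ for some $\eta>0$ yields $|B_D| \le 2^{\epsilon m - \eta n}$. Combined with the $2^{-\epsilon m}$ per-pair probability, this controls the contribution of pairs involving $\mathbf{0}$ by $2|B_D|\cdot 2^{-\epsilon m} \le 2 \cdot 2^{-\eta n}$, which is exponentially small.

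The main obstacle I anticipate is the contribution of pairs of nonzero codewords: the naive union bound yields $2^m \cdot |B_D| \cdot 2^{-2\epsilon m} = 2^{(1-\epsilon)m(1+o(1))}$, which is exponentially large and does not suffice. Beating this requires exploiting the per-block product structure of $\Psi$ and the observation that, for a fixed nonzero $\bfz$, the set of $\bfx$ with $d_1(\Psi\bfx,\Psi(\bfx+\bfz))$ close to the worst-case lower bound $W(\bfz) := \sum_{i:\,w_i(\bfz)>0} 2^{w_i(\bfz)-1}$ is much smaller than $\{0,1\}^m$: within each block, only a small set of ``bad'' $\bfx_i$ saturates the Gray-reflection inequality $|\psi_i(\bfx_i)-\psi_i(\bfx_i+\bfz_i)| \ge 2^{w_i-1}$, while for most $\bfx_i$ the contribution is substantially larger. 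Carrying out this per-block estimate, one obtains a bound on the count of bad nonzero pairs that compensates the factor of $2^m$ in the naive estimate, reducing the second contribution to $2^{-\Omega(n)}$ as well. Concluding with Markov's inequality then yields the theorem.
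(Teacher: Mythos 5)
Your first-moment plan over pairs has a genuine gap, and it is exactly at the step you flag as "the main obstacle": the per-block rescue you sketch is not just unproven, it is false. For a fixed nonzero difference $\bfz$, it is not true that only a small set of $\bfx$ attains block contributions near $2^{w_i-1}$. Take $\bfz_i$ supported on the least significant Gray coordinate $g_0$ of block $i$: by \eqref{eq:gray}, flipping $g_0$ flips only $b_0$, so $|\psi_i(\bfx_i)-\psi_i(\bfx_i+\bfz_i)|=1$ for \emph{every} $\bfx_i$. Hence any $\bfz$ whose nonzero blocks consist of this single bit satisfies $d_1(\Psi\bfx,\Psi(\bfx+\bfz))\le n-1\ll n^{1+\epsilon}$ for all $2^m$ choices of $\bfx$, and there are at least $2^{n-1}$ such $\bfz$. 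This kills the Markov/union-bound-over-pairs strategy at a structural level, not merely a technical one: if one such $\bfz$ happens to lie in the random linear code $\cA$ (probability about $2^{-\epsilon m}$ per $\bfz$), then \emph{all} $2^k$ pairs $(\bfx,\bfx+\bfz)$ with $\bfx\in\cA$ are bad simultaneously, so the expected number of bad pairs is at least of order $2^{k-\epsilon m}=2^{(1-2\epsilon)m}$, exponentially large for $\epsilon<1/2$, even though the probability of the bad event may be small. Markov's inequality on the pair count therefore cannot certify an exponentially small failure probability; the argument must be organized around difference vectors (or some other structure), and even then a naive count of "low-impact" differences is delicate because their number is roughly $2^{\Theta(\epsilon n\log n\log\log n)}$, which exceeds $2^{\epsilon m}$. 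Your treatment of the pairs involving $\mathbf{0}$ is fine, but that is the easy half.

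The paper avoids pair counting altogether. It shows that if $d_1(\Psi(\bfx),\Psi(\bfy))\le n^{1+\epsilon}$, then for at least a $1-\beta$ fraction of the blocks the difference $|\psi_i(\bfx_i)-\psi_i(\bfy_i)|$ is at most $n^{1+\epsilon}/(\beta(n-1))$, and by the reflection property of the Gray map this forces $\bfx_i$ and $\bfy_i$ to \emph{agree} on the first $t_i\ge m_i-\epsilon\log n-O(1)$ coordinates of each such block; summing, $\bfx$ and $\bfy$ agree on a coordinate set of size $m(1-\epsilon-O(1/\log n))$. The distance requirement is thus reduced to a projection-injectivity property: no two codewords may coincide on such a set $T$. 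Lemma~\ref{lem:existencel} then supplies codes of any rate below $|T|/m$ with this property via the Lov\'asz Local Lemma, and shows all but an exponentially small fraction of random linear codes have it because the corresponding $mR\times|T|$ submatrix of the generator matrix is full rank with overwhelming probability. In short, the paper converts "small Kendall distance" into "agreement on many coordinates" and needs only one rank event per coordinate set, which is what sidesteps the exponential blow-up your union bound runs into; to salvage your approach you would need an analogous reduction rather than a sharper count of bad pairs.
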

\vspace*{.05in}
The rank modulation codes described above have asymptotically
optimal trade-off between the rate and the distance.
Therefore, this family of codes achieves the
capacity of rank modulation codes (see \cite[Thm.~3.1]{BM2010}).

To prove the above theorem we need the help of the following lemma.
\begin{lemma}\label{lem:existencel}
Let $0\le \alpha \le 1$ and let $T \subset [m], |T| \ge \alpha m$ be a coordinate subset.
There exists a binary code $\cC$ of length $m$ and any rate $R < \alpha$ such that
the projections of any two codewords
$\bfx, \bfy \in \cC, \bfx \ne \bfy$ on $T$ are distinct. Moreover all but an
exponentially decaying   fraction of binary linear codes of any rate less than
$\alpha$ are such.
\end{lemma}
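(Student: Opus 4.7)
The plan is to use a standard random linear code argument. First I would reformulate the desired property: the projection $\pi_T : \cC \to \{0,1\}^{|T|}$ is injective if and only if $\cC \cap V = \{0\}$, where
\[
V := \{\bfz \in \{0,1\}^m : z_i = 0 \text{ for all } i \in T\}
\]
is the subspace of vectors supported on $[m]\setminus T$. Since $|T| \ge \alpha m$, we have $|V| = 2^{m-|T|} \le 2^{(1-\alpha)m}$.

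Next I would fix an integer $k$ with $k/m < \alpha$ (any rate $R < \alpha$) and consider the random linear code $\cC = \{\bfu G : \bfu \in \{0,1\}^k\}$, where $G$ is a uniformly random $k \times m$ binary matrix. The key observation is that for every fixed nonzero $\bfu \in \{0,1\}^k$, the row vector $\bfu G$ is uniformly distributed in $\{0,1\}^m$. Consequently,
\[
\Pr[\bfu G \in V] \;=\; 2^{|V|}/2^m \;=\; 2^{-|T|} \;\le\; 2^{-\alpha m}.
\]
A union bound over the $2^k - 1$ nonzero messages then gives
\[
\Pr[\cC \cap V \ne \{0\}] \;\le\; (2^k - 1)\,2^{-\alpha m} \;\le\; 2^{-(\alpha - R)m},
\]
which is exponentially small in $m$ since $R < \alpha$.

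From this I would draw both conclusions at once: a code with the desired projection property exists (since the failure probability is less than $1$), and moreover all but an exponentially small fraction of binary linear codes of rate at most $R$ satisfy the property (since this is exactly the failure probability in the random ensemble). There is no real obstacle here — the only point to be careful about is the standard fact that, when $G$ is uniformly random, each individual codeword $\bfu G$ (for $\bfu \ne 0$) is uniform on $\{0,1\}^m$, which is immediate from linearity and the uniform distribution of the columns of $G$. No additional structural assumption on $T$ is used beyond its size.
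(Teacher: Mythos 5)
Your proof is correct, but it follows a genuinely different route from the paper's. The paper proves the two claims separately: for existence it picks $M=2^{mR}$ codewords independently and uniformly at random and invokes the Lov\'asz Local Lemma to avoid all pairwise agreement events on $T$ (a plain union bound over \emph{pairs} of independent codewords would only reach rate $\alpha/2$, which is why the LLL is used there); for the statement about linear codes it notes that injectivity of the projection holds when the $mR\times |T|$ submatrix of a random generator matrix has full rank, and cites a result of Gerth to bound the failure probability by $5\cdot 2^{-(|T|-mR)^2}$. You instead exploit linearity from the outset: injectivity on $T$ is equivalent to $\cC\cap V=\{0\}$ with $V$ the subspace vanishing on $T$, each $\bfu G$ with $\bfu\ne 0$ is uniform on $\{0,1\}^m$, and a union bound over the $2^k-1$ nonzero messages gives failure probability at most $2^{-(\alpha-R)m}$. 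This single first-moment argument delivers both conclusions at once, dispenses with the LLL and the external rank estimate, and even strengthens the existence claim to linear codes. What it gives up is purely quantitative: the paper's rank-based bound decays like $2^{-(|T|-mR)^2}$ and so remains exponentially small even when $|T|-mR$ is only of order $\sqrt m$, whereas your bound needs $(\alpha-R)m$ to grow with $m$ --- which the lemma's hypothesis of a fixed rate $R<\alpha$ (and the application, where the gap is $\Omega(m/\log n)$) does provide. Two small points: $\Pr[\bfu G\in V]$ equals $|V|/2^m$, not $2^{|V|}/2^m$ (your subsequent value $2^{-|T|}$ is the intended one); and, like the paper, you pass from a fraction of generator matrices to a fraction of codes without comment, which is harmless at the paper's own level of rigor.
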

\begin{IEEEproof}
The proof is a standard application of the probabilistic method.
Construct a random binary code $\cC$ of length $m$ and size $M = 2^{mR}$ randomly
and independently selecting $M$ vectors from $\{0,1\}^m$ with uniform probability.
Denote by $\cE_{\bfx,\bfy}$ the event that two different  vectors $\bfx,\bfy \in \cC$
agree on $T$.
Clearly $\Pr(\cE_{\bfx,\bfy}) = 2^{-\alpha m},$ for all  $\bfx,\bfy \in \cC.$ The event
$\cE_{\bfx,\bfy}$ is dependent on at most $2(M-1)$ other such events. Using the Lov\'{a}sz
Local Lemma \cite{alo00}, all such events can be avoided, i.e.,
$$
\Pr\big(\bigcap_{\bfx,\bfy\in \cC} \bar{\cE}_{\bfx,\bfy}\big)>0,
$$
if
 $$
e2^{-\alpha m} (2M-1) \le 1
 $$
or
 $$
M \le 2^{\alpha m-1}/e +1/2.
 $$
Hence as long as $R< \alpha,$ there exists a code of rate $R$
that contains no pairs of vectors
$\bfx,\bfy$ that agree on $T$. This proves the first part of the lemma.

To prove the claim regarding random linear codes chose a linear code $\cC$ spanned
by the rows of an $mR \times m$ binary matrix $G$ each entry of which
is chosen independently with $P(0)=P(1)=\half.$
The code $\cC$
will not contain two codewords that project identically on $T$ if
the $mR \times |T|$ submatrix of $G$ with columns indexed by $T$ has full rank.
If $mR < |T|$ then a given $mR \times |T|$
sub-matrix of $G$ has full rank
with probability at least $1-5\cdot 2^{-(|T|-mR)^2}$  \cite{ger86}.
Thus if $|T|$ grows at least as $T=mR +\sqrt m$, the proportion of matrices $G$
in which the $(mR\times T)$ submatrix is singular falls exponentially with $m$.
Even if each of these matrices generates a different code, the proportion of
undesirable codes will decline exponentially with $m.$
\end{IEEEproof}

\begin{IEEEproof}[Proof of Thm. \ref{thm:existence}]
Suppose that $\bfx, \bfy \in \{0,1\}^m$ where $m =\sum_{i=2}^{n} m_i$ and
$m_i =\lfloor\log{i}\rfloor$ as above in this section.
Let
$$
\dist_{1}(\Psi(\bfx),\Psi(\bfy))
=\sum_{i=2}^{n}|\psi_i(\bfx_i)-\psi_i(\bfy_i)| \le n^{1+\epsilon}
$$
for some $0 \le \epsilon\le 1.$
Let $0<\beta<1.$ For at least a $1-\beta$ proportion of indices $i$
we can claim that
  $$
|\psi_i(\bfx_i)-\psi_i(\bfy_i)| \le \frac{n^{1+\epsilon}}{\beta(n-1)}.
  $$

On the other hand, if $\bfx_i$ and $\bfy_i$ have the same value in the
first $t_i$ of the $m_i$ coordinates,
then the construction of the Gray map implies that
$|\psi_i(\bfx_i)-\psi_i(\bfy_i)| \ge 2^{m_i -t_i}.$ Hence
for at least a $1-\beta$ fraction of the $i$'s,
$$
2^{m_i -t_i} \le \frac{n^{1+\epsilon}}{\beta(n-1)},
$$
i.e., $t_i \ge m_i - \epsilon \log{n} - \log{\frac{n}{\beta(n-1)}}.$

Therefore, $\bfx$ and $\bfy$ must coincide in a well-defined subset of coordinates
of size
\begin{align*}
\sum_{i=2}^{\lceil(1-\beta)(n-1)\rceil}t_i &\ge  \sum_{i=2}^{\lceil(1-\beta)(n-1)\rceil}
\!\Big( m_i - \epsilon \log{n} - \log{\frac{n}{\beta(n-1)}}\Big)\\
&= \sum_{i=2}^{\lceil(1-\beta)(n-1)\rceil} \lfloor\log{i}\rfloor  \\&\hspace{.4in}-\epsilon(1-\beta)(n-1) \log{n}
- O(n)\\
&= m (1-\epsilon -O(1/\log n)).
\end{align*}
Invoking Lemma~\ref{lem:existencel} now concludes the proof:
indeed, it implies that there exists a binary code of rate at least $1-\epsilon$
where no such pair of vectors $\bfx$ and $\bfy$ exists. The claim about
linear codes also follows immediately.
\end{IEEEproof}

\subsection{Construction III: A quantization map}

In this section we describe another construction of rank modulation
codes from codes in the Hamming space over an alphabet of size $q\ge 2.$
The focus of this construction is on the case when the number of errors is
large, for instance, forms a proportion of $n^2.$

The first result in this section serves as a warm-up for a more involved
construction given later. In the first construction we use binary codes in
a rather simple manner to obtain codes in permutations.
This nevertheless gives codes in $\cX_n$ that correct a large number
of errors. Then we generalize the construction by using codes
over larger alphabets.

\vspace*{.1in}
\subsubsection{Construction IIIA: Rank modulation codes from binary base codes}
Recall our notation $\cG_n$ for the space of inversion vectors
and the map $J:\cG_n\to S_n$ that sends them to permutations.
Let $\cC\in\{0,1\}^{n-1}$ be a binary code that encodes $k$ bits into $n-1$ bits.

\vspace*{.1in}{\em Encoding algorithm (Construction IIIA):} \begin{itemize}
\item Let $\bfm\in\{0,1\}^k$ be a message. Find its encoding $\bfb$ with the code $\cC$.
\item Compute the vector $\bfx=\vartheta(\bfb),$ where
$\vartheta: \{0,1\}^{n-1} \to \cG_n$ is as follows:
  \begin{gather*}
   \bfb=(b_1, b_2, \ldots, b_{n-1}) \stackrel{\vartheta}\mapsto \bfx
= (x_1, \ldots, x_{n-1})\\
      x_i = \begin{cases}
0 \quad \text{ if } b_{i} =0\\
i \quad \text{ if } b_{i} =1
\end{cases}, \quad i=1,\dots, n-1.
      \end{gather*}
\item Find the encoding of $\bfm$ as $\sigma=J(\bfx).$
\end{itemize}

\vspace*{.1in}\begin{theorem}\label{thm:many} Let $\cC(n-1,M,d\ge 2t+1)$
be a code in the binary Hamming
space and let $\cC_\tau\subset S_n$ be the set of permutations obtained from
it using the above encoding algorithm.
Then the code $\cC_\tau\subset S_n$ has cardinality $M$ and
corrects any $r$ Kendall errors where
  $r=t^2/4$ if $t\ge 2$ is even and $r=(t^2-1)/4$ if $t\ge3$ is odd.
\end{theorem}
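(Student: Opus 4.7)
The plan has two components. First, for the cardinality $|\cC_\tau|=M$: since $\vartheta$ has an obvious coordinate-wise left inverse ($b_i=0$ iff $x_i=0$) and $J:\cG_n\to S_n$ is a bijection, the composition $\bfb\mapsto J(\vartheta(\bfb))$ is injective on $\cC$.

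Second, I lower-bound the minimum Kendall distance of $\cC_\tau$. The crucial observation is that $\vartheta$ converts a Hamming disagreement at coordinate $i$ into an $\ell_1$ disagreement of size \emph{exactly} $i$: if $\bfb\ne\bfb'\in\cC$ differ on a set $S\subseteq[n-1]$ with $|S|\ge d\ge 2t+1$, then at each $i\in S$ one of $\vartheta(\bfb)_i,\vartheta(\bfb')_i$ is $0$ and the other is $i$, so $|\vartheta(\bfb)_i-\vartheta(\bfb')_i|=i$. The minimum of $\sum_{i\in S}i$ over $|S|=2t+1$, $S\subseteq[n-1]$ is attained at $S=\{1,2,\ldots,2t+1\}$, giving
\[
\dist_1(\vartheta(\bfb),\vartheta(\bfb'))=\sum_{i\in S}i\ \ge\ 1+2+\cdots+(2t+1)\ =\ (t+1)(2t+1),
\]
and Lemma~\ref{lemma:wim} then yields $\dist_\tau(J(\vartheta(\bfb)),J(\vartheta(\bfb')))\ge(t+1)(2t+1)$.

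The stated error-correction follows from the standard ``correct up to half the minimum distance,'' after the routine check that $\lfloor((t+1)(2t+1)-1)/2\rfloor\ge\lfloor t^2/4\rfloor$ for $t\ge 2$. A concrete efficient decoder realizing this is a quantize-and-correct procedure: from the received $\sigma'$, form $\bfy=\bfx_{\sigma'}$, threshold each coordinate at $\lceil i/2\rceil$ to produce a candidate $\hat{\bfb}\in\{0,1\}^{n-1}$, apply the $t$-error Hamming decoder of $\cC$ to $\hat{\bfb}$, and remap via $J\circ\vartheta$. For correctness: if $\hat{\bfb}$ disagrees with the true $\bfb$ at positions $i_1<\cdots<i_h$, then a mismatch at $i_j$ forces $|y_{i_j}-x_{i_j}|\ge\lceil i_j/2\rceil$, so $r\ge\dist_1(\vartheta(\bfb),\bfy)\ge\sum_{j=1}^{h}\lceil i_j/2\rceil\ge\sum_{j=1}^{h}\lceil j/2\rceil$; for $r=\lfloor t^2/4\rfloor$ this rules out $h\ge t+1$, so $h\le t$ and $\cC$'s decoder recovers $\bfb$.

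Main obstacle: essentially none. The argument is a short chain of inequalities once Lemma~\ref{lemma:wim} is in hand; the one substantive observation is the sharp identity $|\vartheta(\bfb)_i-\vartheta(\bfb')_i|=i$ at differing coordinates, which lets Lemma~\ref{lemma:wim} pay off by a full factor of $i$ at coordinate $i$. Note that the theorem's $r=\lfloor t^2/4\rfloor$ comes out as a rather loose corollary of the minimum-distance bound $(t+1)(2t+1)$.
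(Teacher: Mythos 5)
Your proposal is correct, and its second half is essentially the paper's own proof: the paper establishes the claim precisely through the quantize-and-correct decoder you describe (it thresholds $\bfx_{\sigma'}(i)$ at $\lfloor i/2\rfloor$ rather than $\lceil i/2\rceil$, which is immaterial), arguing that a quantization error at coordinate $i$ costs at least on the order of $i/2$ in $\ell_1$, so $t+1$ such errors force $\dist_1(\bfx_\sigma,\bfx_{\sigma'})>\sum_{i=1}^{t}\lfloor i/2\rfloor=\lfloor t^2/4\rfloor$, and then Lemma~\ref{lemma:wim} converts the hypothesis of at most $r$ Kendall errors into the needed $\ell_1$ bound. What you add, and the paper does not state, is the opening minimum-distance argument: the exact identity $|\vartheta(\bfb)_i-\vartheta(\bfb')_i|=i$ at differing coordinates gives $\dist_\tau(\cC_\tau)\ge(t+1)(2t+1)$, so under (not necessarily efficient) minimum-distance decoding the code would in fact correct about $t^2+\lfloor 3t/2\rfloor$ Kendall errors, of which the stated $r=\lfloor t^2/4\rfloor$ is indeed a loose corollary. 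Your closing remark is therefore accurate, with the caveat that the paper's radius $\lfloor t^2/4\rfloor$ is not an artifact of sloppiness but the radius guaranteed by the \emph{efficient} decoder that reuses the $t$-error Hamming decoder of $\cC$: an adversary can flip the thresholded bit at coordinate $i$ with only about $\lceil i/2\rceil$ Kendall errors, so $\Theta(t^2/4)$ is the correct guarantee for that algorithm, while your bound quantifies how much error-correction capability is combinatorially left on the table.
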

\begin{IEEEproof}
To prove the claim about error correction, consider the following decoding procedure
of the code $\cC_\tau.$ Let $\pi$ be a permutation read off from memory.

{\em Decoding algorithm (Construction IIIA):} \begin{itemize}
\item Find the inversion vector
$\bfx_\pi=(x_1, \ldots, x_{n-1}).$
\item Form a
vector $\bfy\in\{0,1\}^{n-1}$ by putting
  \begin{equation*}
y_i = \begin{cases}
0 \quad \text{ if } x_i \le \lfloor \nicefrac{i}2 \rfloor\\
1 \quad \text{ if } x_i > \lfloor \nicefrac{i}2 \rfloor.
\end{cases}
  \end{equation*}
\item Decode $\bfy$ with the code $\cC$ to obtain a codevector $\bfc.$
If the decoder returns no result, the algorithm detects more than $t$ errors.
\item Compute the overall decoding result as $J(\vartheta(\bfc)).$
\end{itemize}

\vspace*{.1in}
Let $\sigma$ be the original permutation, let $\bfx_\sigma$ be its inversion
vector, and let $\bfc(\sigma)$ be the corresponding codeword of $\cC$.
The above decoding can go wrong only if the Hamming distance
$\dist_H(\bfc(\sigma),\bfy)> t.$ For this to happen the $\ell_1$ distance between
$\bfx_\pi$ and $\bfx_\sigma$ must be large, in the worst case satisfying
the condition $\dist_1(\bfx_\pi,\bfx_\sigma) >
\sum_{i=1}^{t}\lfloor i/2 \rfloor.$
This gives the claimed result.
\end{IEEEproof}

 From a binary code in Hamming space of rate $R$ that corrects any $\tau n$
errors, the above construction produces a rank modulation code $\cC_\tau$
of size $2^{Rn}$ that is able to correct $\Omega(n^2)$ errors. The rate
of the obtained code equals $\approx R (\log n)^{-1}.$ According to
Theorem \ref{thm:bounds} this scaling is optimal for the multiplicity of
errors considered. Some numerical examples are given in Sect.~\ref{sect:examples}.

\vspace*{.1in}
\subsubsection{Construction IIIB: Rank modulation codes from nonbinary codes}
This construction can be further generalized to obtain codes that are able to
correct a wide range of Kendall errors by observing that the quantization
map employed above is a rather coarse tool which can be refined if we rely on
codes in the $q$-ary Hamming space for $q>2.$ As a result, for any $\epsilon <1$
we will be able
to construct families of rank modulation codes of rate $R= R(\epsilon)>0$
that correct  $\Omega(n^{1+\epsilon})$ Kendall errors.

Let $l>1$ be an integer. Let $Q=\{a_1,a_2,\dots,a_q\}$ be the code alphabet.
Consider a code $\cC$ of length $n'=2(l-1)(q-1)$ over $Q$ and assume that it corrects
any $t$ Hamming errors (i.e., its minimum Hamming distance is at least $2t+1$).
Let $n = (2l+1)(q-1).$ Consider the mapping $\Theta_q:Q^{n-1}\to\cG_n,$
defined as $\Theta_q(\bfb)=(\vartheta_1(b_1),\vartheta_2(b_2),\dots,\vartheta_{n-1}(b_{n-1})),$
$\bfb = (b_1,\dots,b_{n-1}) \in Q^{n-1},$
where
  \begin{gather*}
   \vartheta_i(a_j)=\begin{cases} 0 &\text{if } i<3(q-1)\\
     (2k-1)(j-1) &\text{if }(2k-1)(q-1)\le i\\&\quad<(2k+1)(q-1) \\&\hspace*{.5in}k= 2,3,\dots,l,
  \end{cases}\\
   j=1,2,3,\dots,q.
 \end{gather*}

To construct a rank modulation code $\cC_\tau$ from the code $\cC$ we perform the
following steps.

\vspace*{.1in}
{\em Encoding algorithm (Construction IIIB):}
\begin{itemize}
\item Encode the message $\bfm$ into a codeword $\bfc\in \cC$\\
\item Prepend the vector $\bfc$ with $3(q-1)-1$ symbols $a_1$.\\
\item Map the obtained $(n-1)$-dimensional vector to $S_n$ using the map
$J\circ\Theta_q.$
\end{itemize}

The properties of this construction are summarized in the following
statement.
\begin{theorem} \label{thm:nonbinary}
Let $n'=2(l-1)(q-1), n=(2l+1)(q-1), l\ge 2$.
Let $\cC(n',M,d=2t+1)$ be a code in the $q$-ary Hamming space.
Then the code $\cC_\tau\subset S_n$ described by the above construction
has cardinality $M$ and corrects any $r$
Kendall errors, where
  $$r=(t+1-(q-1)s)(s+1)-1$$ and $s=\lfloor(t+1)/(2(q-1))\rfloor, s\ge 0.$
\end{theorem}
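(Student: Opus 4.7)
The plan is to decode $\cC_\tau$ by inverting the map $J\circ\Theta_q$ coordinate-wise and then feeding the result to the Hamming decoder of $\cC$. Given a received permutation $\pi$, first compute its inversion vector $\bfx_\pi\in\cG_n$; for each index $i$ with $(2k-1)(q-1)\le i<(2k+1)(q-1)$ (for some $k\in\{2,\dots,l\}$), quantize $\bfx_\pi(i)$ to the nearest admissible value of the form $(2k-1)(j-1)$ and record $\hat b_i=a_j$. Discard the first $3(q-1)-1$ coordinates (whose correct symbol is the fixed prefix $a_1$) and hand the remaining length-$n'$ word $\hat\bfc$ to the $t$-error-correcting decoder of $\cC$; applying $J\circ\Theta_q$ to the corrected codeword then recovers $\sigma$. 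Because $\Theta_q$ and $J$ are both injective on the relevant domain, $|\cC_\tau|=M$ is immediate.

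The analysis rests on two ingredients. First, Lemma~\ref{lemma:wim} converts the Kendall hypothesis $d_\tau(\sigma,\pi)\le r$ into the $\ell_1$ inequality $d_1(\bfx_\sigma,\bfx_\pi)\le r$. Second, one proves a per-coordinate cost lemma: within block $k$ the admissible values of $\bfx_\sigma(i)$ lie on the arithmetic progression $0,2k-1,\dots,(q-1)(2k-1)$ with common gap $2k-1$, so the nearest-neighbour quantizer tolerates any integer perturbation of magnitude at most $k-1$. In particular, each Hamming error in $\hat\bfc$ at a coordinate in block $k$ consumes at least $k-1$ units of $\ell_1$ budget at that coordinate (a sharper analysis in fact gives $k$, but the weaker bound $k-1$ is what matches the stated closed form; pushing $\bfx_\pi(i)$ beyond $[0,(q-1)(2k-1)]$ offers no shortcut, since clipping to the extreme class only costs more).

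The remainder is a knapsack-style minimization. To force $t+1$ Hamming errors with minimum $\ell_1$ spend, the adversary should attack the cheapest blocks first; since each block offers exactly $2(q-1)$ attackable positions, setting $s=\lfloor(t+1)/(2(q-1))\rfloor$ and $u=t+1-2(q-1)s$, an optimal attack fills blocks $k=2,\dots,s+1$ completely and places the remaining $u$ errors in block $s+2$. Summing the per-error costs yields
$$
2(q-1)\sum_{k=2}^{s+1}(k-1)+(s+1)u
   =(q-1)s(s+1)+(s+1)u
   =(s+1)\bigl(t+1-(q-1)s\bigr),
$$
so the hypothesis $r\le (s+1)(t+1-(q-1)s)-1$ rules out $t+1$ Hamming errors; the $\cC$-decoder then recovers $\bfc$ and hence $\sigma$.

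The most delicate step I expect is the per-coordinate cost lemma: arguing the tolerance $k-1$ cleanly and verifying that the quantizer behaves as claimed when perturbations push $\bfx_\pi(i)$ to the extremes of $[0,i]$ (so that the naive nearest-neighbour analysis is not weakened). The knapsack step itself is a short arithmetic-progression calculation whose only genuine subtlety is the boundary case $u=0$ (i.e., $t+1$ an exact multiple of $2(q-1)$), in which the partial block drops out of the sum and one must check that the closed-form expression continues to hold.
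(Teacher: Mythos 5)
Your proposal is correct and follows essentially the same route as the paper's proof: both reduce the Kendall hypothesis to an $\ell_1$ bound on inversion vectors via Lemma~\ref{lemma:wim}, charge each Hamming error in block $k$ a per-coordinate quantization cost (the paper likewise uses the conservative cost $k-1$, via its interval-based quantizer), and perform the same greedy fill-the-cheapest-blocks minimization over $t_j\le 2(q-1)$ to arrive at $(s+1)(t+1-(q-1)s)$. The only differences are cosmetic: you use a nearest-neighbour quantizer where the paper writes explicit (slightly asymmetric) decoding intervals, and you make the boundary/clipping remarks explicit.
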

\vspace*{.1in}\begin{IEEEproof}
We generalize the proof of the previous theorem.
Let $\pi$ be the permutation read off from the memory.

\vspace*{.1in}{\em Decoding algorithm (Construction IIIB):}
\begin{itemize}
\item Find the inversion vector $\bfx_\pi=(x_1, \ldots, x_{n-1}).$
\item Form a $q$-ary
vector $\bfy$ by putting
   \begin{gather*}
    y_i=\begin{cases}
        a_1 &\text{if }i<3(q-1)\\
        a_j &\text{if }(2k-1)(q-1)\le i< (2k+1)(q-1)\\&\hspace*{.1in}\text{ and }
            (2k-1)(j-1)-(k-1) \le x_i\\&\hspace*{1.1in}\le (2k-1)(j-1)+k,\\
     &\hspace*{1.7in}k=2,3\dots,l
      \end{cases}
  \end{gather*}
for $i=1,\dots,n-1.$
\item Decode $\bfy'=(y_{3(q-1)},\dots,y_{n-1})$ with the code $\cC$ to obtain a codevector $\bfc.$ If the decoder returns no results, the algorithm detects more
than $t$ errors.
\item Find the decoded permutation as $\sigma=J(\Theta_q(\bfc)).$
\end{itemize}

\vspace*{.1in}
There will be an error in decoding only when $\bfy'$
contains at least $t+1$ Hamming errors. $\bfy'$ contains
coordinates $3(q-1)$ to $n-1$ of $\bfy.$
Suppose that $t_j, 1\le j\le l-1$ is the number of Hamming
errors in
coordinates between $(2j+1)(q-1)$ and $(2j+3)(q-1)$.
We have $\sum_{j=1}^{l-1}t_j\ge t+1$ and
$t_j \le 2(q-1).$
The $\ell_1$ distance between the received and original
inversion vectors equals
\begin{align*}
\sum_{j =1}^{l-1}j t_j &\ge
\mathop{\min_{t_j \le 2(q-1)}}_{\sum_jt_j \ge t+1}\quad \sum_{j =1}^{l-1}j t_j\\
&= 2(q-1) (1+2+\dots+s) \\&+ (t+1 -2(q-1)s)(s+1)\\
&= (q-1)s(s+1) + (t+1 -2(q-1)s)(s+1)\\
&= (t+1 -(q-1)s)(s+1).
\end{align*}
In estimating the minimum in the above calculation we
have used the fact that the smaller-indexed $t_j$'s should be given
the maximum value before the higher-indexed ones are used.

 Therefore if the $\ell_1$ distance between the received and original
inversion vectors is less than or equal to $r$ then decoding $\bfy'$ with the code $\cC$
will recover $\bfx_\sigma.$
Using \eqref{eq:dd} we complete the proof.
\end{IEEEproof}

\vspace*{.05in} {\em Asymptotic analysis:}
For large values of the parameters we obtain
 that the number of errors correctable by $\cC_\tau$ is
$$
  r\approx \frac {t^2}{4q}
$$
or, in other words, $d(\cC_\tau)\approx d^2/8q.$
In particular, if $d = n'\delta$ 
and $q = O(n^{1-\epsilon})$, $0<\epsilon<1$, then we get
$d(\cC_\tau) = \Omega(n^{1+\epsilon}).$
If the code $\cC$ has cardinality $q^{Rn'}$ then $|\cC_\tau|=q^{Rn'}=
q^{R(n-3(q-1))}.$ Using \eqref{eq:rank_rate} yields the value $(1-\epsilon)R$
for the rate of the code $\cC_\tau.$ This is only by a factor of $R$ less than the
optimal scaling rate of \eqref{eq:bounds}. To achieve the optimal asymptotic rate-distance
trade-off
one need to use a $q$-ary code of rate very close to one and non-vanishing
relative distance; moreover $q$ needs to grow with code length $n$ as $n^{1-\epsilon}.$

To show an example, let us take the family of linear codes on Hermitian curves (see e.g.,
\cite[Ch.~10]{bla08}). The codes can be constructed over any alphabet of size
$q=b^2,$ where $b$ is a prime power. Let $u$ be an integer, $ b+1\le u< b^2-b+1.$
The length $n'$, dimension $k$ and Hamming distance $d$ of the codes are as follows:
$$ n'=b^3+1, \;k=(b+1)u-(1/2)b(b-1)+1,\; d\ge n'-(b+1)u.$$
In the next section we will give a few examples of codes with specific parameters. For the
moment, let us look at the scaling order of $R$ and $r$ as functions of
the length of the codes $\cC_\tau$ obtained from the above arguments.
We have $n\approx q b,$ so $q\approx n^{2/3}$, and
  $$
   R=\frac{k}{n'}=\frac{(b+1)u-(1/2)b(b-1)+1}{b^3+1},
  $$
  $$
   \frac{d}{n'}\ge\frac{b^3-(b+1)u}{b^3+1}.
  $$
Let us choose $u=b^2/2,$
which gives $R\approx \frac12\alpha$ and $\delta\approx
\frac12\alpha,$ where $\alpha=1-O(1/b).$ Finally, we obtain that
the rate of the codes $C_\tau$ behaves as
  $$
    \frac{\log q^{R n'}}{\log n!}=\frac23 R(1-o(1))
  $$
and the number of correctable
Kendall errors is $r\approx(1/64)n^{4/3}$, which gives the scaling order mentioned in the
previous paragraph
for $\epsilon=1/3.$

  By taking $u=b^{1+\gamma}$, for $0<\gamma <1$, and by shortening the Hermitian
  code to the length $\lambda (b+1)u$, for $\lambda >1$ arbitrarily close to $1$  we obtain a code with rate arbitrarily close to $1$ with relative minimum
  distance equal to $1-1/\lambda$. This yields asymptotically optimal scaling,
  in the sense defined in section~\ref{sec:intro}, for values of $\epsilon$
  that range in the interval $(0,1/3)$. For values of $\epsilon$ in the
  range $(1/3,1)$, families of codes with optimal scaling can similarly be
  constructed by starting from Algebraic Geometry codes with lengths that
  exceed larger powers of $q$ than $q^{3/2}$, for instance, codes from the
Garcia-Stichtenoth curves or other curves with a large number of rational points.

Another general example can be derived from the family of quadratic residue (QR)
codes \cite{MS1977}.
Let $p$ be a prime, then there exist QR codes over $\ff_\ell$ of length $n'=p,$ cardinality $M=\ell^{(p+1)/2}$ and distance $\ge\sqrt {p}$, where $\ell$ is a prime
that is a quadratic residue modulo $p.$
Using them in Theorem \ref{thm:nonbinary} (after an
appropriate shortening), we obtain rank modulation codes in $S_n,$ where
$n=p+3(\ell-1),$ with cardinality
$M$ and distance $d(\cC_\tau)=\Omega(p/\ell).$
Let us take a sufficiently large prime $p$ and let $\ell$
be a prime and a quadratic residue modulo p. Suppose that
$\ell=\Theta(p^{\frac 12-\alpha})$ for some small $\alpha>0.$
Pairs of primes with the needed properties
can be shown to exist under the assumption that the
generalized Riemann hypothesis is true (see e.g. \cite{LO1977}).
Using the corresponding QR code $\cC$ in Theorem \ref{thm:nonbinary},
we obtain $n=p+3(\ell-1)=\Theta(p),$ $d(\cC_\tau)=\Theta(n^{\frac12+\alpha})$ and
$\log  M=\Theta(\frac n2(\frac12-\alpha)\log n),$ giving the rate $\frac12(\frac12
-\alpha).$ Although this trade-off does not achieve the scaling order of
\eqref{eq:bounds}, it still accounts for a good asymptotic family of codes.

\section{Examples}\label{sect:examples}
Below $\cC_\tau$ refers to the rank modulation code that we are constructing,
$M=|\cC_\tau|,$ and $t$ is the number of Kendall errors that it corrects.
We write the code parameters as a triple $(n,\log  M,d)$ where $d=2t+1.$
In the examples we do not attempt to
optimize the parameters of rank modulation codes; rather, our goal is to show
that there is a large variety of constructions that can be adapted to the needs
of concrete applications. More codes can be constructed from the codes obtained
below by using standard operations such as shortening or lengthening of codes
\cite{jia10a,BM2010}.
Note also that the design distance of rank modulation
codes constructed below may be smaller than their true distance, so all the
values of the distance given below are lower estimates of the actual values.

 From Theorem \ref{thm:RS} we obtain codes with the following parameters.
Let $q=2^l,$  then $n=q-1$ and $\log M\ge l\lfloor \log (q-2t-2)\rfloor$.
For instance, let $l=6,$ then we obtain the triples $(63,30,31),(63,24,47),$
etc. Taking $l=8,$ we obtain for instance the following sets of parameters:
$(255,56,127),(255,48,191).$

Better codes are constructed using Theorem \ref{thm:main}.
Let us take $n=62,$ then $m=253.$ Taking twice shortened BCH codes $\cB_t$
of length $m$, we obtain a range of rank modulation codes according to the designed
distance of $\cB_t.$ In particular, there are  rank modulation codes in
$\cX_{62}$ with the parameters
  $$
 (62,253-8t,2t+1),t=1,2,3,\dots .
  $$
Similarly, taking $n=105,$ we can construct a suite of rank modulation codes from shortened BCH codes of length $m=510,$ obtaining codes $\cC_\tau$ with the
parameters
  $$
  (105,510-9t,2t+1), t=1,2,3,\dots .
  $$
We remark that for the case of $t=1$ better codes were constructed in
\cite{jia10a}. Namely, there exist single-error-correcting codes in $S_n$
of size $M\ge n!/(2n)$. For instance, for $n=62$ this gives $M=2^{277.064}$
as opposed to our $M=2^{245}.$
A Hamming-type upper bound on $M$ has the form
  $$
    M(t)\le \frac{n!} {\sum_{i=0}^t K_n(i)}
  $$
where
  \begin{align*}
   &K_n(0)=1\\& K_n(1)=n-1\\
   &K_n(2)=(n^2-n-2)/2\\
   &K_n(3)=\binom {n+1}3-n
  \end{align*}
(see e.g., \cite[p.15]{K1973} which also gives a general formula for $K_n(i)$
for $i\le n$).
The codes constructed above are not close to this bound (note however that,
except for small $t$,  Hamming-type bounds are usually loose).

Now let us use binary BCH codes in Theorem \ref{thm:many}.
Starting with codes of length $n'=63,255$
we obtain
rank modulation codes with the parameters
$(64,36,13),$$(64,30,19),$$(64,24,25),$$ (64,18,51),$$ (64,16,61),$
$(64,10,85),$ $(256,215,13),$ $(256,207,19),$ $ (256,199,25),$ $(256,191,33),$ etc.
These codes are not so good for a small number of errors, but
become better as their distance increases.

Finally consider examples of codes constructed from Theorem \ref{thm:nonbinary}.
As our seed codes we consider the following possibilities: products
of Reed-Solomon codes and codes on Hermitian curves.

Let us take $\cC=\cA\otimes\cB,$ where $\cA[15,9,7]$
and $\cB[14,3,12]$ are Reed-Solomon codes over $\ff_{16}.$
Then the code $\cC$ has length $n'=14\cdot15 =210,$ (so $l=8$),
cardinality $16^{27}=2^{108}$ and distance $84,$ so $t=41.$
 From Theorem \ref{thm:nonbinary}
we obtain a rank modulation code $\cC_\tau$ with the parameters
$(n=255,\log M = 108, d \ge107).$
Some further sets of parameters for codes of length $n=255$
obtained as we vary $\dim(\cB)$ are as follows:
  $$
\begin{array}{rccccc}
\dim(\cB) &4&5&6&7&8\\
\log M&144&180&216&252&288\\
d&95&79&67&55&49
\end{array}
  $$
The code parameters obtained for $n=255$ are better than the parameters
obtained for the same length in the above examples with  binary BCH
codes, although decoding product RS codes is somewhat more difficult
than decoding BCH codes
On the other hand, relying on product RS codes offers a great deal of
flexibility in terms of the resulting parameters of rank modulation codes.

We have seen above that Hermitian codes account for some of the best
asymptotic code families when used in Theorem \ref{thm:nonbinary}.
They can also be used to obtain good finite-length rank modulation codes.
To give an example, let $\cC$ be a  projective Hermitian code of
length $4097$ over $\ff_{2^8}$.
We have $\dim(\cC)=17a-119, d(\cC)\ge 4097-17a$ for any integer $a$ such that $17\le a
\le 240;$ see \cite[p.~441]{bla08}.
Let us delete any 17 coordinates (puncture the code)
to get a code $\cC'$ with
\begin{align*}
  n'        &=4080=16(q-1),\\
  \dim(\cC')&=\dim(\cC), \\
  d_H(\cC') &\ge n'-17a.
\end{align*}
We have $n=n'+3(q-1)=4845.$ For $a\in\{60,\dots,100\}$ we obtain a suite of rank modulation
codes with the parameters
$(n,7208,6119),(n,7344,6071),\dots,(n,12648,4079).$

\vspace*{.1in}
As a final remark, note that most existing coding schemes for the Hamming space,
binary or not, can be used in one or more of our constructions to produce
rank modulation codes. The decoding complexity of the obtained codes essentially
equals the decoding complexity of decoding the original codes for correcting Hamming errors
 or for low error probability. This includes codes for which the
Hamming distance is not known or not relevant for the decoding performance,
such as LDPC and polar coding schemes. In this case, the performance of rank modulation
schemes should be studied by computer simulations, similarly to the analysis
of the codes used as building elements in the constructions.

\section{Conclusion}
We have constructed a number of large classes of rank modulation codes, associating
them with binary and $q$-ary codes in the Hamming space.
If the latter codes possess efficient decoding algorithms, then the methods
discussed above translate these algorithms to decoding algorithms of rank modulation
codes of essentially the same complexity. Our constructions also afford simple
encoding of the data into permutations which essentially reduces to the encoding
of linear error-correcting codes in the Hamming space. Thus, the existing theory
of error-correcting codes can be used to design practical error-correcting
codes and procedures for the rank modulation scheme.

A direction of research that has not been addressed in the literature
including the present work, is to construct an adequate model of a probabilistic
communication channel that is associated with the Kendall tau distance.
We believe that the underpinnings of the channel model should be related to
the process of charge dissipation of cells in flash memory devices.
Once a reasonably simple probabilistic description of the error process is
formally modelled, the next task will be to examine the performance on that
channel of code families constructed in this work.

\end{document}